\DeclareSymbolFont{AMSb}{U}{msb}{m}{n}
\DeclareMathSymbol{\N}{\mathbin}{AMSb}{"4E}
\DeclareMathSymbol{\Z}{\mathbin}{AMSb}{"5A}
\DeclareMathSymbol{\R}{\mathbin}{AMSb}{"52}
\DeclareMathSymbol{\Q}{\mathbin}{AMSb}{"51}
\DeclareMathSymbol{\erert}{\mathbin}{AMSb}{"50}
\DeclareMathSymbol{\I}{\mathbin}{AMSb}{"49}
\DeclareMathSymbol{\C}{\mathbin}{AMSb}{"43}
\definecolor{gray}{gray}{0.4}
\newcommand{\remove}[1]{}
\newtheorem{theorem}{Theorem}[section]
\newtheorem{lemma}[theorem]{Lemma}
\newtheorem{definition}[theorem]{Definition}
\newtheorem{remark}[theorem]{Remark}
\newcommand{\1}{\mathbbm{1}}
\newcommand{\AAA}{\mathcal A}
\newcommand{\BBB}{\mathcal B}
\newcommand{\BbB}{\mathfrak{B}}
\newcommand{\DDD}{\mathcal D}
\newcommand{\FFF}{\mathcal F}
\newcommand{\PPP}{\mathcal P}
\newcommand{\SSS}{\mathcal S}
\newcommand{\eps}{\varepsilon}
\newcommand{\polylog}{\mathop{\rm polylog}}
\newcommand{\poly}{\mathop{\rm poly}}
\def\E{\operatorname*{\mathbb{E}}}
\newcommand{\thickhline}{%
    \noalign {\ifnum 0=`}\fi \hrule height 1pt
    \futurelet \reserved@a \@xhline
}
\newcolumntype{"}{@{\hskip\tabcolsep\vrule width 1pt\hskip\tabcolsep}}
\newlength{\fboxhsep}
\newlength{\fboxvsep}
\newlength{\fboxtoprule}
\newlength{\fboxbottomrule}
\newlength{\fboxleftrule}
\newlength{\fboxrightrule}
\def\@frameb@xother#1{%
  \@tempdima\fboxtoprule
  \advance\@tempdima\fboxvsep
  \advance\@tempdima\dp\@tempboxa
  \hbox{%
    \lower\@tempdima\hbox{%
      \vbox{%
        \hrule\@height\fboxtoprule
        \hbox{%
          \vrule\@width\fboxleftrule
          #1%
          \vbox{%
            \vskip\fboxvsep
            \box\@tempboxa
            \vskip\fboxvsep}%
          #1%
          \vrule\@width\fboxrightrule}%
        \hrule\@height\fboxbottomrule}%
    }%
  }%
}
\long\def\fboxother#1{%
  \leavevmode
  \setbox\@tempboxa\hbox{%
    \color@begingroup
    \kern\fboxhsep{#1}\kern\fboxhsep
    \color@endgroup}%
  \@frameb@xother\relax}
\title{Relaxed Models for Adversarial Streaming:\\ The Advice Model and the Bounded Interruptions Model\thanks{This project was partially supported by the Israel Science Foundation (grant 1871/19) and by
Len Blavatnik and the Blavatnik Family foundation.}}
\author{
Moshe Shechner\thanks{Tel Aviv University. \texttt{ moshe.shechner@gmail.com}}
\and
Menachem Sadigurschi\thanks{Ben-Gurion University. \texttt{ menisadi@gmail.com}}
\and
Uri Stemmer\thanks{Tel Aviv University and Google Research. \texttt{u@uri.co.il}}
}
\date{January 22, 2023}
\begin{document}

\maketitle

\begin{abstract}
Streaming algorithms are typically analyzed in the {\em oblivious} setting, where we assume that the input stream is fixed in advance. Recently, there is a growing interest in designing {\em adversarially robust}
streaming algorithms that must maintain utility even when the input stream is chosen
{\em adaptively and adversarially} as the execution progresses. While several fascinating results are known for the adversarial setting, in general, it comes at a very high cost in terms of the required space. Motivated by this, in this work we set out to explore intermediate models that allow us to interpolate between the oblivious and the adversarial models. Specifically, we put forward the following two models:
\begin{itemize}
    \item {\em The advice model}, in which the streaming algorithm may occasionally ask for one bit of {\em advice}. 
    \item {\em The bounded interruptions model}, in which we assume that the adversary is only partially adaptive.
\end{itemize}

We present both positive and negative results for each of these two models. In particular, we present generic reductions from each of these models to the oblivious model. This allows us to design robust algorithms with significantly improved space complexity compared to what is known in the plain adversarial model.
\end{abstract}

\section{Introduction}
Streaming algorithms are algorithms for processing data streams
in which the input is presented as a sequence of items. Generally speaking, these algorithms have access
to limited memory, significantly smaller than what is needed to store the entire data stream. This field was formalized by Alon, Matias, and Szegedy~\cite{AlonMS99}, and has generated a large body of work that intersects many other fields in computer science. 

In this work, we focus on streaming algorithms that aim to track a certain function of the input stream, and to continuously report estimates of this function. Formally,

\begin{definition}[Informal version of Definition~\ref{def:obliviousFull}]\label{def:obliv}
Let $X$ be a finite domain and let $g:X^*\rightarrow \R$ be a function that maps every input $\vec{x}\in X^*$ to a real number $g(\vec{x})\in\R$.

Let $\AAA$ be an algorithm that in every round $i\in[m]$ obtains an element $x_i\in X$ and 
outputs a response $z_i\in\R$. 
Algorithm $\AAA$ is said to be an {\em oblivious streaming algorithm} for $g$ with accuracy $\alpha$, failure probability $\beta$, and stream length $m$, if the following holds for every input sequence $\vec{x}=(x_1,x_2,\dots,x_m)\in X^m$. Consider an execution of $\AAA$ on the input stream $\vec{x}$. Then, 
$$
\Pr\left[\forall i\in[m] \text{ we have } z_i\in (1\pm\alpha)\cdot g(x_1,\dots,x_i) \right]\geq1-\beta,
$$
where the probability is taken over the coins of algorithm $\AAA$.
\end{definition}

Note that in Definition~\ref{def:obliv}, the streaming algorithm is required to succeed (w.h.p.)\ for every {\em fixed} input stream. In particular, it is assumed that the choice for the elements in the stream is {\em independent} from the
internal randomness of the streaming algorithm. This assumption, called the {\em oblivious setting}, is crucial for the correctness of most classical streaming algorithms. In this work, we are interested in the setting where this assumption does not hold, referred to as the {\em adversarial setting}.

\subsection{The (Plain) Adversarial Model}

The adversarial streaming model, in various forms, was considered by~\cite{MironovNS11,GHRSW12,GHSWW12,AhnGM12,AhnGM12b,HardtW13,BenEliezerY19,BenEliezerJWY20,HKMMS-JACM,WZ21,BEO22,ACSS21}. The adversarial setting is modeled by a two-player game between a (randomized) \texttt{StreamingAlgorithm} and an \texttt{Adversary}. At the beginning, we fix a function $g:X^*\rightarrow\R$. Throughout the game, the adversary chooses the updates in the stream, and is allowed to {\em query} the streaming algorithm at $T$ time steps of its choice (referred to as ``query times''). Formally, 

\begin{enumerate}[leftmargin=15px]
    \item For round $i=1,2,\dots,m$
    \begin{enumerate}
	\item The \texttt{Adversary} chooses an update $x_i\in X$ and a query demand $q_i\in\{0,1\}$, under the restriction that $\sum_{j=1}^{i} q_j \leq T$.
	\item The \texttt{StreamingAlgorithm} processes the new update $x_i$. If $q_i=1$ then, the \texttt{Streaming\-Algorithm} outputs a response $z_i$, which is given to the \texttt{Adversary}.
\end{enumerate}
\end{enumerate}

The goal of the \texttt{Adversary} is to make the \texttt{StreamingAlgorithm} output an incorrect response
$z_i$ at some query time $i$ in the stream. 
Let $g$ be a function defining a streaming problem, and suppose that there is an oblivious streaming algorithm $\AAA$ for $g$ that uses space $s$. It is easy to see that $g$ can be solved in the adversarial setting using space $\approx s\cdot T$, by running $T$ copies of $\AAA$ and using each copy for at most one query. The question is if we can do better. Indeed, Hassidim et al.~\cite{HKMMS-JACM} showed the following result.

\begin{theorem}[\cite{HKMMS-JACM}, informal]\label{thm:HKMMS}
If there is an {\em oblivious} streaming algorithm for a function $g$ that uses space $s$, then there is an {\em adversarially robust} streaming algorithm for $g$ supporting $T$ queries using space $\approx\sqrt{T}\cdot s$.
\end{theorem}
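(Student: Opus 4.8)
The plan is to reproduce the differential-privacy-based reduction. Instead of the naive construction (which runs $T$ fresh copies of the oblivious algorithm, one per query, for space $s\cdot T$), I would run only $k\approx\sqrt{T}$ independent copies $\AAA_1,\dots,\AAA_k$ of the oblivious algorithm for $g$, each instantiated with accuracy $\alpha/3$ and a polynomially small failure probability, and answer every one of the (at most) $T$ queries by releasing a \emph{differentially private} aggregate of the $k$ current estimates. Concretely, at each query time $i$ I would feed the multiset of the current estimates $z_i^{(1)},\dots,z_i^{(k)}$ into a DP ``approximate median'' mechanism (a private interior-point mechanism on the discretized value range $\{(1+\alpha)^t\}$) run with privacy parameter $\eps_0\approx 1/\sqrt{T}$, and output its answer; between queries nothing is released, so no privacy is spent. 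The total space is $k\cdot s$ plus lower-order terms, i.e.\ $\approx\sqrt{T}\cdot s$.

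Two things must be checked. \emph{(i) Accuracy, assuming the copies are accurate.} If at query time $i$ at least (say) a $0.9$-fraction of the $k$ estimates lie in $(1\pm\alpha/3)\,g(x_1,\dots,x_i)$, then the exact median of the estimates is a $(1\pm\alpha/3)$-approximation, and a DP approximate-median mechanism returns a value sandwiched between two accurate estimates, hence a $(1\pm\alpha)$-approximation; the sample complexity needed for this is $k=\tilde O(1/\eps_0)$, which is exactly why $k\approx\sqrt{T}$ suffices. \emph{(ii) The copies stay accurate.} This is the crux. Fix a copy $\AAA_j$. Every update the adversary feeds to $\AAA_j$ is a (randomized, with randomness independent of $\AAA_j$) function of the transcript of released answers together with earlier updates; so by induction and post-processing, and by advanced composition over the $\le T$ releases, the entire stream presented to $\AAA_j$ is $(\eps,\delta)$-DP as a function of $\AAA_j$'s internal coins, with $\eps=O(\eps_0\sqrt{T\log(1/\delta)})=O(1)$. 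By the generalization property of differential privacy (as developed in the adaptive data analysis literature), this forces the realized stream for $\AAA_j$ to be distributed close to one that is \emph{independent} of $\AAA_j$'s coins, i.e.\ an oblivious stream; since $\AAA_j$ is correct on oblivious streams with overwhelming probability, it remains correct here, with only a mild degradation of parameters. A union bound over the $k$ copies and the $\le T$ query times — the reason we take the per-copy failure probability polynomially small and absorb $\polylog$ factors into the space and accuracy — then yields that at every query time almost all copies are simultaneously accurate, which closes the loop with (i).

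I expect the main obstacle to be making step (ii) rigorous. Because the adversary is adaptive, one cannot treat the transcript as a fixed function of $\AAA_j$'s coins; the argument must reason about the joint distribution over (the adversary's strategy, the coins of all copies, the coins of the DP mechanism) and isolate the dependence on $\AAA_j$ alone, and it needs a clean quantitative ``DP-implies-generalization'' transfer lemma stated for this interactive setting. Secondary technical points are: selecting the right DP primitive for the median and tracking its $\tilde O(\cdot)$ sample complexity (including the $\log$ dependence on the value range, $1/\alpha$, $1/\beta$, and $1/\delta$); and, if one wants the sharper bound, replacing ``release at every query'' by a sparse-vector / \textsc{AboveThreshold} mechanism that spends privacy only when $g$'s value has moved by a $(1+\alpha)$ factor, which upgrades $\sqrt{T}$ to $\sqrt{\lambda}$ for $\lambda$ the flip number of the problem.
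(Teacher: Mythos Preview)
The paper does not prove this theorem; it is quoted from \cite{HKMMS-JACM} as background. Your proposal is essentially the construction and analysis of that reference, and it is correct: run $k\approx\sqrt{T}$ independent copies, answer each query with an $(\eps_0,0)$-DP approximate median of the $k$ estimates with $\eps_0\approx 1/\sqrt{T}$, use advanced composition over $T$ releases to get a constant cumulative privacy parameter, and invoke the DP-generalization transfer lemma to argue that each copy's stream is ``almost oblivious'' so that a large fraction of copies stay accurate at every query time.

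One remark on the part you flagged as the main obstacle. The present paper, when it reuses this machinery for its own Theorem~\ref{thm:ASA-bigTheorem}, phrases step~(ii) slightly differently from your write-up: rather than saying ``the stream fed to $\AAA_j$ is DP in $\AAA_j$'s coins,'' it views the collection of random strings $(r_1,\dots,r_k)$ as the database, observes that the indicator $f_{\vec{x}_t}(r)=\1\{\AAA(r,\vec{x}_t)\text{ is accurate}\}$ is a post-processing of the DP transcript, and applies Theorem~\ref{thm:DPgeneralization} to conclude $\tfrac{1}{k}\sum_j f_{\vec{x}_t}(r_j)\approx\E_r[f_{\vec{x}_t}(r)]\ge 9/10$. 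This sidesteps the need to isolate a single copy and reason about the joint distribution you were worried about; it directly gives that most copies are accurate simultaneously, which is exactly what the median step needs. Either framing works, but the ``database of random strings'' viewpoint makes step~(ii) cleaner.
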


Note that when the number of queries $T$ is large, this construction incurs a large space blowup. One way for coping with this is to assume additional restrictions on the function $g$ or on the input stream. Indeed, starting with Ben-Eliezer et al.~\cite{BenEliezerJWY20}, most of the positive results on adversarial streaming assumed that the input stream is restricted to have a small {\em flip-number}, defined as follows. 

\begin{definition}[Flip number \cite{BenEliezerJWY20}]
The $(\alpha, m)$-flip number of 
an input stream $\vec{x}$ w.r.t.\ a function $g$, denoted as $\lambda_{\alpha,m}(\vec{x},g)$, or simply $\lambda$, is the maximal number of times that the value of $g$ changes (increases or decreases) by a factor of $(1+\alpha)$ during the stream $\vec{x}$.
\end{definition}

Starting from \cite{BenEliezerJWY20}, the prior works of \cite{BenEliezerJWY20,HKMMS-JACM,WZ21,ACSS21} presented generic constructions that transform an oblivious streaming algorithm with space $s$ into an adversarially robust streaming algorithm with space $\approx s\cdot\poly(\lambda)$. That is, under the assumption that the flip-number is bounded, these prior works can even support $T=m$ queries. This is useful since the parameter $\lambda$ is known to be small for many interesting streaming problems in the {\em insertion-only} model (where there are no deletions in the stream). However, in general it can be as big as $\Theta(m)$, in which case the transformations of \cite{BenEliezerJWY20,HKMMS-JACM,WZ21,ACSS21} come at a very high cost in terms of space. 

To summarize this discussion, current transformations from the oblivious to the adversarial setting are useful when either the number of queries $T$ is small, or under the assumption that the flip-number is small.

\subsection{Our Results}
One criticism of the adversarial model is that it is (perhaps) too pessimistic. Indeed, there could be many scenarios that do not fall into the oblivious model, but are still quite far from an ``adversarial'' setting. Motivated by this, in this work, we set out to explore intermediate models that allow us to interpolate between the oblivious model and the adversarial model. Specifically, we study two such models, which we call the {\em advice model} and the {\em bounded interruptions model}.

\subsubsection{Adversarial Streaming with Advice (ASA)}\label{sec:ASAintro}

\noindent
We put forward a model where the streaming algorithm may occasionally ask for one bit of {\em advice} throughout the execution. Let $\eta\in\N$ be a parameter controlling the query/advice rate. We consider the following game, referred to as the ASA game, between the \texttt{StreamingAlgorithm} and an \texttt{Adversary}.
\medskip

\noindent\fbox{%
    \parbox{\linewidth}{%
For round $i=1,2,\dots,m:$
    \begin{enumerate}[leftmargin=23px,rightmargin=5px,itemsep=5px]
	\item The \texttt{Adversary} chooses an update $x_i\in X$ and a query demand $q_i\in\{0,1\}$, under the restriction that $\sum_{j=1}^{i} q_j \leq T$.
	\item The \texttt{StreamingAlgorithm} processes the new update $x_i$. 
	
	\item If $q_i=1$ then
	
	\begin{enumerate}[leftmargin=10px,topsep=3px]
	    \item The \texttt{StreamingAlgorithm} outputs a response $z_i$, which is given to the \texttt{Adversary}.
	    
	    \item If $\left(\sum_{j=1}^{i} q_j\right){\rm mod}\,\eta=0$ then the \texttt{StreamingAlgorithm} specifies a predicate $p_i:X^*\rightarrow\{0,1\}$, and  obtains $p_i(x_1,x_2,\dots,x_i)$.
	\end{enumerate}

\end{enumerate}

    }%
}
\; \medskip

That is, in the ASA model the adversary is allowed a total of $T$ queries, and once every $\eta$ queries the streaming algorithm is allowed to obtain one bit of advice, computed as a predicate of the items in the stream so far.

The main motivation to study this model is a theoretical one; it gives us an intuitive way to measure the amount of additional information that the streaming algorithm needs in order to maintain utility in the adversarial setting.
This model could also be interesting from a practical standpoint in the following context. Consider a streaming setting in which the input stream is fed into both a (low space) streaming algorithm $\AAA$ and to a server $\SSS$. The server has large space and can store all the input stream (and, therefore, can in principle solve the streaming problem itself). 
However, suppose that the server has some communication bottleneck and is busy serving many other tasks in parallel. Hence we would like to delegate as much of the communication as possible to the ``cheap'' (low space) streaming algorithm $\AAA$. 
The ASA model allows for such a delegation, in the sense that the streaming algorithm handles most of the queries itself, and only once in every $\eta$ queries it asks for one bit of advice from the server.

We show the following generic result.

\begin{theorem}[informal version of Theorem~\ref{thm:cleanASA}]\label{thm:ASAintro}
If there exists an oblivious {\em linear} streaming algorithm for a function $g:X^*\rightarrow\R$ with space $s$, then for every $\eta\in\N$ there exists an adversarially robust streaming algorithm for $g$ in the ASA model with query/advice rate $\eta$ using space $\approx \eta\cdot s^2$.
\end{theorem}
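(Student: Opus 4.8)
The plan is to reduce the ASA problem back to the oblivious model of Definition~\ref{def:obliv} by running a small pool of independent copies of the given oblivious linear algorithm and carefully controlling which copies are ever exposed to the adversary, using the advice bits to decide when a copy has become ``unsafe.'' I would begin by partitioning the (at most $T$) query times into consecutive \emph{windows} of exactly $\eta$ query times each, so that the streaming algorithm receives exactly one advice bit at the end of each window. Inside a window the algorithm reveals only $\eta$ outputs, so it only needs to be robust against $\eta$ adaptive rounds; across windows, the advice bit will certify (with high probability) that the answers produced in the window were all accurate, after which we may safely ``refresh'' and proceed with a new batch of copies.

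For a single window I would answer the $\eta$ adaptive queries by aggregating the outputs of a batch of copies of the oblivious sketch in a robust fashion --- either by the ``one fresh copy per query'' idea, or by a differentially private median over the batch as in the transformations of~\cite{BenEliezerJWY20,HKMMS-JACM}; in both cases the batch size and the bookkeeping for the aggregation are what ultimately determine the space. The linear structure of the sketch is what makes the transition between windows cheap: partial sketches --- of the updates falling inside a single window --- can be maintained and added together, and an exposed working sketch can be re-randomized in place by adding a freshly drawn linear mask that has been maintained in the background, so that we never need to keep a full independent copy running per future window and only a constant number of batches are alive at once. A careful accounting of the batch size against the robustness we need, together with the re-randomization overhead, is what yields the claimed $\approx\eta\cdot s^2$ space.

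To wire in the advice, at the last query time $i$ of window $w$ the streaming algorithm specifies the predicate $p_i$ that hard-codes the finitely many responses $z_{i_1},\dots,z_{i_\eta}$ it produced during window $w$ together with the value it is currently tracking --- all of which are legitimate to hard-code, since they are functions of $x_1,\dots,x_i$ and of the algorithm's own coins --- and sets $p_i(x_1,\dots,x_i)=1$ iff every one of those responses lies in $(1\pm\alpha)\cdot g(x_1,\dots,x_i)$. If the advice is $1$ we continue with the fresh (background-maintained, re-randomized) batch; the event that the advice is $0$ has probability at most $\approx\beta$ per window by the per-copy correctness of the oblivious algorithm, so by first amplifying the oblivious algorithm's failure probability to $\beta/\poly(T)$ and then union-bounding over the $T/\eta$ windows, the overall failure probability stays below the target $\beta$.

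The main obstacle, and the step I would spend the most care on, is the independence/conditioning argument: one must show that, conditioned on everything the adversary has observed through the end of window $w$ (all revealed responses \emph{and} all $w$ advice bits), the batch used to answer the queries in window $w{+}1$ is distributed exactly as a fresh execution of the oblivious algorithm on the now-fixed stream prefix, so that the guarantee of Definition~\ref{def:obliv} can be invoked window by window. This means setting up the right filtration of the adversary's views, verifying that background-maintained, re-randomized copies leak nothing despite having processed adaptively chosen updates, and charging each advice bit --- a single deterministic function of the stream prefix and the already-revealed responses --- by at most a constant-factor loss in the per-window failure probability. Assembling the per-window claims with the union bound then completes the proof.
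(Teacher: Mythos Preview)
Your plan diverges from the paper in a way that leaves a real gap, not just a stylistic difference.

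In the paper the advice bits are not used as a correctness certificate at all; they are used to \emph{transmit state}.  Algorithm \texttt{RobustAdvice} keeps $k$ ``active'' copies (answering queries via a differentially private median), $k$ ``next'' copies (freshly sampled at the start of the current outer loop and fed only the updates since then), and $k$ ``shadow'' copies that share the randomness of the ``next'' copies.  Each advice predicate $p_i$ asks the server for one more bit of the $ks$-bit state that the shadow copies would have had on the prefix $\vec{x}_t$ at the start of the loop.  After $ks$ advice requests --- i.e., after $\eta k s$ queries --- the shadow states are fully recovered, and linearity gives $\AAA^{\rm next}_j+\AAA^{\rm shadow}_j$ as a truly fresh active copy on the full stream (Lemma~\ref{lem:stateRecovery}).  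The differential-privacy generalization argument (Lemma~\ref{lem:privacy}, Theorem~\ref{thm:DPgeneralization}) is what keeps the $k$ active copies accurate across those $\eta k s$ adaptive queries; balancing that against the accuracy of \texttt{PrivateMed} forces $k=\tilde\Theta(\eta s)$, and \emph{that} is where the $\eta s^2$ space comes from.

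Your scheme does not supply this mechanism.  Using the single end-of-window bit as a verifier is idle: if your within-window aggregation is already robust to $\eta$ adaptive queries, the bit is redundant; if it is not, a ``0'' only tells you that you have already output a wrong answer.  More importantly, ``re-randomizing in place by adding a freshly drawn linear mask'' does not give an independent sketch.  If the exposed sketch has matrix $A$ and you set $A'\leftarrow A+B$ with $B$ fresh, then $A'$ still carries $A$, so the adversary's adaptive updates --- which may be correlated with $A$ --- remain correlated with $A'$; the oblivious guarantee of Definition~\ref{def:obliv} no longer applies to the ``refreshed'' copy.  To obtain a genuinely fresh copy on the full prefix you need $A''\vec{x}_t$ for a brand-new $A''$ sampled only now, and nothing in your outline explains how to obtain that with a single advice bit per window; this is exactly what the paper's bit-by-bit state transmission solves.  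Relatedly, your space accounting is not tied to anything concrete: one-copy-per-query or a DP median over a batch would give $O(\eta s)$ or $\tilde O(\sqrt{\eta}\,s)$ per window, not $\eta s^2$, so the claimed bound does not fall out of your construction.
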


To obtain this result, we rely on a technique introduced by Hassidim et al.~\cite{HKMMS-JACM} which uses {\em differential privacy} \cite{DMNS06} to protect not the input data, but rather the internal randomness of the streaming algorithm. Intuitively, this allows us to make sure that the ``robustness'' of our algorithm deteriorates {\em slower} than the advice rate, which allows us to obtain an advice-robustness tradeoff.

Note that the space complexity of the algorithm from Theorem~\ref{thm:ASAintro} does not depend polynomially on the number of queries $T$. For example, the following is a direct application of this theorem in the context of $F_2$ estimation (i.e., estimating the second moment of the frequency vector of the input stream).

\begin{theorem}[$F_2$ estimation in the ASA model, informal]
Let $\eta\in\N$.
There exists an adversarially robust $F_2$ estimation
algorithm in the ASA model with query/advice rate $\eta$ that guarantees $\alpha$-accuracy (w.h.p.) using space $\tilde{O}\left(\eta/\alpha^4\right)$.
\end{theorem}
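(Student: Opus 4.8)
The plan is to instantiate the generic reduction of \thmref{ASAintro} with a standard linear sketch for $F_2$. Recall that the classical Alon--Matias--Szegedy sketch (and its refinements via \textsc{CountSketch}/\textsc{Tug-of-War}) maintains a linear sketch $S\vec{f}$ of the frequency vector $\vec{f}\in\N^{|X|}$ of the stream, where $S$ is a random matrix with $k=O(\alpha^{-2})$ rows of $\pm1$ (or Gaussian) entries, and outputs $\frac{1}{k}\|S\vec{f}\|_2^2$ as an estimate of $F_2=\|\vec{f}\|_2^2$; this is $\alpha$-accurate with constant probability. Boosting the failure probability to an arbitrary $\beta$ is done by running $O(\log(1/\beta))$ independent copies and taking the median, which keeps the overall data structure \emph{linear} in $\vec{f}$: the median and the squaring are applied only to the final real-valued estimates, not to the stored sketch. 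Maintaining each coordinate of the sketch to the precision needed over a stream of length $m$ costs $O(\log m)$ bits, so the total space is $s=O\!\left(\alpha^{-2}\log(1/\beta)\log m\right)=\tilde O(1/\alpha^2)$.

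It then remains to feed this oblivious algorithm into \thmref{ASAintro}. Since the sketch is linear, the hypothesis of the theorem is met, and we obtain an adversarially robust $F_2$ algorithm in the ASA model with query/advice rate $\eta$ and space $\approx\eta\cdot s^2=\tilde O(\eta/\alpha^4)$, as claimed. Two minor bookkeeping points must be handled. First, the differential-privacy-based reduction underlying \thmref{ASAintro} typically degrades the accuracy by a constant (or logarithmic) factor, so we run the oblivious sketch with accuracy parameter $\alpha'=c\alpha$ for a suitable $c$, which changes the space only by constants and $\polylog$ terms. Second, the reduction needs the oblivious algorithm to succeed with failure probability polynomially small in the relevant parameters (the number of internal copies it spawns, the stream length, and $1/\alpha$); setting $\beta$ accordingly again contributes only $\polylog$ factors, all absorbed into the $\tilde O(\cdot)$.

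The main (and essentially only) obstacle is verifying that \emph{linearity is genuinely preserved} through the standard accuracy/confidence boosting that lifts a textbook $F_2$ sketch to the $(\alpha,\beta)$-guarantee demanded by \thmref{ASAintro}: the median-of-means over blocks and the median over independent repetitions must be implemented so that the maintained state is exactly a linear image of the frequency vector, with all nonlinear post-processing confined to the \emph{output} computation performed at query time. Once this is checked, the space bound follows immediately by substituting $s=\tilde O(1/\alpha^2)$ into the $\eta\cdot s^2$ bound of \thmref{ASAintro}, and there are no further difficulties.
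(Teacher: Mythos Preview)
Your proposal is correct and follows exactly the route the paper takes: instantiate the generic reduction (\thmref{ASAintro}/Theorem~\ref{thm:cleanASA}) with the linear AMS sketch of space $s=\tilde O(1/\alpha^2)$ to get $\eta\cdot s^2=\tilde O(\eta/\alpha^4)$. One small simplification: the formal version of the reduction (Theorem~\ref{thm:cleanASA}) only requires the oblivious algorithm to have \emph{constant} failure probability $1/10$, so no median boosting of $\AAA$ is needed at all and your concern about preserving linearity through the median step does not arise---the basic mean-of-$O(1/\alpha^2)$-estimators AMS sketch already suffices, and all the confidence amplification is handled internally by \texttt{PrivateMed} inside \texttt{RobustAdvice}.
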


\begin{remark}
We stress that there is a formal sense in which the ASA model is ``between'' the oblivious and the (plain) adversarial models.
Clearly, the ASA model is easier than the plain adversarial model, as we can simply ignore the advice bits. On the other hand, a simple argument shows that the ASA model (with any $\eta>1$) is qualitatively harder than the oblivious setting. To see this, let $\AAA$ be an algorithm in the ASA model for a function $g$ with query/advice rate $\eta>1$. Then $\AAA$ can be transformed into the following oblivious algorithm $\AAA_{\rm oblivious}$ for $g$ (that returns an estimate in every time step without getting any advice):
\begin{enumerate}[leftmargin=15px]
    \item Instantiate $\AAA$. 
    \item In every time $i\in[m]$:
    
    \begin{enumerate}
        \item Obtain an update $x_i\in X$.
        \item Duplicate $\AAA$ (with its internal state) into a shadow copy $\AAA^{\rm shadow}$.
        \item Feed the update $(x_i,0)$ to $\AAA$ and the update $(x_i,1)$ to $\AAA^{\rm shadow}$, and obtain an answer $z_i$ from the shadow copy. Note that we only query the shadow copy.
        \item Output $z_i$ and erase the shadow copy from memory.
    \end{enumerate}
\end{enumerate}
As we ``rewind'' $\AAA$ after every query, it is never expected to issue an advice-request and so $\AAA_{\rm oblivious}$ never issue advice-request as well. That is, $\AAA_{\rm oblivious}$ is an oblivious model algorithm.
Furthermore, a simple argument shows that this algorithm maintains utility in the oblivious setting.\footnote{To see this, fix an input stream $\vec{x}=(x_1,x_2,\dots,x_m)$, and fix $j\in[m]$. Note that the distribution of the output given by $\AAA_{\rm oblivious}$ in time $j$ when running on $\vec{x}$ is identical to the outcome distribution of $\AAA$ when running on the stream $((x_1,0),\dots,(x_{j-1},0),(x_j,1))$, which must be accurate w.h.p.\ by the utility guarantees of $\AAA$ (since there is only 1 query in this alternative stream, then $\AAA$ gets no advice when running on it). The claim now follows by a union bound over the query times.}
\end{remark}

\begin{remark}
Our construction has the benefit that the predicates specified throughout the interaction are ``simple'' in the sense that every single one of them can be computed in a streaming fashion. That is, given the predicate $p_i$, the bit $p_i(x_1,x_2,\dots,x_i)$ can be computed using small space with one pass over $x_1,x_2,\dots,x_i$. 
\end{remark}

\paragraph{A negative result for the ASA model.}
Theorem~\ref{thm:ASAintro} shows a strong positive result in the ASA model, for streaming problems that are defined by real valued functions. We compliment this result by presenting a negative result for a simple streaming problem which is {\em not} defined by a real valued function. Specifically, we consider (a variant of) the well-studied $\ell_0$-sampling problem, where the streaming algorithm must return a {\em uniformly random} element from the set of non-deleted elements. It is known that the $\ell_0$-sampling problem is easy in the oblivious setting (see e.g.\ \cite{JowhariST11}) and hard in the plain adversarial setting (see e.g.\ \cite{AhnGM12}). 
Using a simple counting argument, we show that the $\ell_0$-sampling problem remains hard also in the ASA model {\em even if the query/advice rate is 1}, i.e., even if the streaming algorithm gets an advice bit for {\em every} query.

\subsubsection{Adversarial Streaming with Bounded Interruptions (ASBI)}\label{sec:ASBIintro}

\noindent
Recall that in the plain adversarial model, the adversary is fully adaptive in the sense the $i$th update may be chosen based on all of the information available to the adversary up until this point in time. We consider a refinement of this setting in which the adversary is only {\em partially} adaptive. The game begins with the adversary specifying a complete input stream. Throughout the execution, the adversary (who sees all the outputs given by the streaming algorithm) can adaptively decide to {\em interrupt} and to replace the suffix of the stream (which has not yet been processed by the streaming algorithm). For simplicity, here  we assume that the streaming algorithm is queried on every time step (i.e., $T=m$).

Formally, let $R\in\N$ be a parameter bounding the number interruptions. We consider the following game, referred to as the ASBI game, between the \texttt{StreamingAlgorithm} and an \texttt{Adversary}.
\medskip

\noindent\fbox{%
    \parbox{\linewidth}{%

\begin{enumerate}[leftmargin=15px,rightmargin=5px,itemsep=5px]
    \item The \texttt{Adversary} chooses a stream $\vec{x}=(x_1,x_2,\dots,x_m)\in X^m$.
    \item For round $i=1,2,\dots,m$\\\vspace{-6px}
    
    \begin{enumerate}[leftmargin=10px]

	\item The \texttt{StreamingAlgorithm} obtains the update $x_i$ and outputs a response $z_i$.
	
	\item The \texttt{Adversary} obtains $z_i$, and outputs an interruption demand $d_i\in\{0,1\}$, under the restriction that $\sum_{j=1}^{i} d_j \leq R$.
	
	\item If $d_i=1$ then the adversary outputs a new stream suffix $(x'_{i+1},\dots,x'_m)$ and we override $(x_{i+1},\dots,x_m)\leftarrow(x'_{i+1},\dots,x'_m)$.
\end{enumerate}
\end{enumerate}

    }%
}
\; \medskip

That is, the adversary sees all of the outputs given by the streaming algorithm, and adaptively decides on $R$ places in which it arbitrarily modifies the rest of the stream. Importantly, the streaming algorithm ``does not know'' when interruptions occur. 
This model gives us a very intuitive interpolation between the oblivious setting (in which $R=0$) and the full adversarial setting (obtained by setting $R=m$, or more subtly by setting $R=T$ when there are at most $T\leq m$ queries). We show the following generic result. 

\begin{theorem}[informal version of Theorem~\ref{thm:interruptions}]\label{thm:ASBIintro}
If there exists an oblivious streaming algorithm for a function $g:X^*\rightarrow\R$ using space $s$ then for every $R\in\N$ there exists an adversarially robust streaming algorithm for $g$ in the ASBI model that resists $R$ interruptions using space $\approx R\cdot s$.
\end{theorem}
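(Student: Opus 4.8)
The plan is to reduce the ASBI game to the oblivious setting by running $R+1$ independent copies of the oblivious algorithm $\AAA$ and arranging the bookkeeping so that, between any two consecutive interruptions, the copy currently responsible for producing outputs has seen a stream that is fixed in advance (from its own point of view). The key structural fact is that, although the adversary may interrupt and rewrite the suffix up to $R$ times, it commits to the \emph{entire} stream before any interruption occurs; each rewrite only affects the not-yet-processed part. So the stream actually fed to the algorithm is a concatenation of at most $R+1$ ``epochs'', where the $k$-th epoch $[t_{k-1}+1, t_k]$ is the block of updates the adversary had committed to at the time epoch $k-1$ ended, and within an epoch the updates are independent of the algorithm's randomness used \emph{within that epoch}. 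I would instantiate copy $\AAA_1, \AAA_2, \dots, \AAA_{R+1}$, keep a pointer $k$ (initially $1$) to the ``active'' copy, and on each round: feed $x_i$ to \emph{all} copies (so each copy maintains a running state on the true stream), but answer the query using only the active copy $\AAA_k$. Whenever an interruption is detected to have taken effect — or, since the algorithm does not see interruptions, whenever we conservatively decide a fresh copy is needed (see below) — we advance the pointer to $k+1$.

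The first subtlety is that the algorithm \emph{does not observe} when an interruption happens, so it cannot directly know when to switch copies. The fix is that it does not need to: since there are at most $R$ interruptions, I can afford to switch to a fresh copy at a set of times that is guaranteed to contain all the true interruption points, as long as I never switch more than $R$ times beyond what is forced. The cleanest way is the following ``all copies see everything, active copy rotates on demand'' scheme, but since rotations must be bounded by $R$, I instead have copy $\AAA_k$ be the one designated to handle epoch $k$, and I prove by induction that epoch $k$'s updates form a stream that is independent of $\AAA_k$'s internal coins: indeed $\AAA_k$'s coins have never influenced any output before epoch $k$ began (outputs in epochs $1,\dots,k-1$ came from $\AAA_1,\dots,\AAA_{k-1}$), and the adversary's choice of epoch $k$'s contents is a (possibly randomized) function only of the transcript of outputs so far, hence of $\AAA_1,\dots,\AAA_{k-1}$'s coins and the adversary's own coins — all independent of $\AAA_k$. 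Therefore, conditioned on the history up to the start of epoch $k$, the stream seen by $\AAA_k$ over epochs $1$ through $k$ is a \emph{fixed} (though history-dependent) sequence, and the oblivious guarantee of $\AAA_k$ applies to it. A union bound over the $\le R+1$ epochs and the failure probabilities of the individual copies, with $\beta$ rescaled by a factor $R+1$, gives overall failure probability $\le \beta$.

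The one genuine technical point that must be handled carefully is that the active copy must answer correctly about the \emph{prefix of the true stream}, $g(x_1,\dots,x_i)$, even though $\AAA_k$ was only ``activated'' at the start of epoch $k$. This is why every copy is fed every update from round $1$ on: $\AAA_k$ processes $x_1, x_2, \dots$ continuously, and its coins are only ``spent on the adversary'' once it starts producing outputs in epoch $k$; before that, feeding it updates leaks nothing because its outputs are never shown. So when $\AAA_k$ becomes active its internal state is a correct oblivious-algorithm state for the genuine prefix $(x_1,\dots,x_{t_{k-1}})$, and it continues correctly through epoch $k$. I expect the main obstacle to be pinning down precisely the measure-theoretic independence claim — stating exactly which random variables the adversary's epoch-$k$ choice is a function of, and verifying that $\AAA_k$'s randomness is absent from that list — since this is the crux of why the reduction is sound; once that is in place, the space bound ($(R+1)\cdot s$ plus $O(\log R)$ for the pointer) and the union bound are routine.
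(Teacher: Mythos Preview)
Your independence argument is sound in spirit, but the proposal has a genuine gap at exactly the point you flag as ``the first subtlety'' and then do not resolve: you never specify an implementable rule for when the algorithm advances the pointer $k$. You write that you ``can afford to switch to a fresh copy at a set of times that is guaranteed to contain all the true interruption points,'' but no such set is available to the algorithm---interruption times are chosen adaptively by the adversary and are invisible to the streaming algorithm by the rules of the ASBI game. Your final scheme (``copy $\AAA_k$ is designated to handle epoch $k$'') presupposes that the algorithm knows when epoch $k$ begins, i.e., when the $(k{-}1)$-th interruption occurs, and it does not. The induction you sketch (that $\AAA_k$'s coins never influenced an output before epoch $k$) is correct \emph{conditioned on} the algorithm switching exactly at interruption boundaries, but that conditioning is not something the algorithm can enforce.

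The paper supplies the missing ingredient: a self-contained detection mechanism. It runs $2R$ copies in pairs $(\AAA_r^{\rm answer},\AAA_r^{\rm check})$, feeds every update to every copy, and on each query outputs $\AAA_r^{\rm answer}$'s estimate if it lies within $(1\pm2\alpha)$ of $\AAA_r^{\rm check}$'s; otherwise it outputs $\AAA_r^{\rm check}$'s estimate once and increments $r$. The analysis has two halves. For accuracy, the check copy $\AAA_{r}^{\rm check}$ has its output revealed only at the single moment $r$ increments, so up to and including that moment the adversary's stream is independent of its coins and its estimate is $\alpha$-accurate; hence every output (either equal to, or within $(1\pm2\alpha)$ of, some check estimate) is $5\alpha$-accurate. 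For bounding $r$, one shows by induction that the $r$-th interruption occurs no later than the $r$-th increment: between the moment pair $r{+}1$ becomes active and the next interruption, both members of the pair are independent of the committed stream, hence both accurate, hence in agreement, so $r$ cannot advance. Your union-bound framework slots in directly once this disagreement-triggered switching rule replaces the unspecified ``switch at epoch boundaries.''
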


To obtain this result, we rely on the {\em sketch switching} technique introduced by Ben-Eliezer et al.~\cite{BenEliezerJWY20}. Intuitively, we maintain $2R$ copies of an oblivious streaming algorithm $\AAA$, where in every given moment exactly two of these copies are designated as ``active''. As long as the two active copies produce (roughly) the same estimates, they remain as the ``active'' copies, and we use their estimates as our response. Once they disagree, we discard them both (never to be used again) and designate two (fresh) copies as ``active''.
We show that this construction can be formalized to obtain Theorem~\ref{thm:ASBIintro}.

Note that the space complexity of the algorithm form Theorem~\ref{thm:ASBIintro} does not depend polynomially on the number of time steps $m$.
For example, the following is a direct application of Theorem~\ref{thm:ASBIintro} for $F_2$ estimation.

\begin{theorem}[$F_2$ estimation in the ASBI model, informal]
Let $R\in\N$.
There exists an adversarially robust $F_2$ estimation
algorithm in the ASBI model that guarantees $\alpha$-accuracy (w.h.p.) while resisting $R$ interruptions using space  $\tilde{O}\left(R/\alpha^2\right)$.
\end{theorem}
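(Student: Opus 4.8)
The plan is to instantiate the generic ASBI reduction of Theorem~\ref{thm:ASBIintro} with the best known oblivious $F_2$ estimation algorithm and then bound the resulting space. Recall that the classical AMS sketch, or the refined Thorup--Zhang / Charikar--Chen--Farach-Colton count-sketch style estimator, gives an oblivious streaming algorithm for $F_2$ that achieves $\alpha$-accuracy with failure probability $\beta$ using space $s = O\!\left(\tfrac{1}{\alpha^2}\cdot\log\tfrac{1}{\beta}\cdot\log m\right)$ bits (including the space to store the random seed, which for AMS can be taken to be $4$-wise independent hash functions and hence is only $O(\log m)$ bits per repetition). So the first step is simply to cite the existence of such an oblivious algorithm $\AAA$.

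Next I would apply Theorem~\ref{thm:ASBIintro}: since $F_2$ is a real-valued function $g:X^*\to\R$ (with $X$ the universe of tokens, updates being insertions/deletions), the theorem yields an adversarially robust algorithm in the ASBI model resisting $R$ interruptions using space $\approx R\cdot s$. The one subtlety is the accounting of the failure probability: the reduction maintains $\Theta(R)$ independent copies of $\AAA$ and needs each of them (or at least each of the at most $2R$ copies that ever become ``active'') to be simultaneously accurate across all the $m$ time steps. A union bound over $m$ time steps and $O(R)$ copies means we should run each copy of $\AAA$ with failure probability $\beta' = \beta/\poly(R,m)$, which only costs an extra $\log(Rm/\beta)$ factor inside $s$; absorbing this into the $\tilde{O}(\cdot)$ notation (which hides $\polylog(m,R,1/\alpha,1/\beta)$ factors) gives the claimed bound of $\tilde{O}(R/\alpha^2)$.

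Putting the pieces together: the final space is $R\cdot s = R\cdot O\!\left(\tfrac{1}{\alpha^2}\log\tfrac{Rm}{\beta}\log m\right) = \tilde{O}\!\left(R/\alpha^2\right)$, which is the statement. I would also remark that, unlike the plain adversarial bound $\tilde{O}(\sqrt{m}/\alpha^2)$ obtainable from Theorem~\ref{thm:HKMMS} (when $T=m$), or the flip-number based bounds which degrade with $\lambda$ and hence with the number of deletions, this bound scales only with the interruption budget $R$ and is completely independent of the stream length $m$ except through logarithmic factors.

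The main obstacle is essentially bookkeeping rather than a new idea: one must verify that the sketch-switching correctness argument underlying Theorem~\ref{thm:ASBIintro} is correctly applied, i.e.\ that ``$2R$ copies'' suffices (each interruption can, in the worst case, force at most one pair of active copies to be discarded, so $2R$ fresh copies always remain), and that the per-copy failure probability is set small enough for the union bound over copies and time steps to go through. Since $F_2$ is one of the canonical streaming problems and the oblivious algorithm is off-the-shelf, there is no difficulty specific to $F_2$ beyond confirming it fits the hypothesis of Theorem~\ref{thm:ASBIintro} (a real-valued function admitting a small-space oblivious tracking algorithm), which it does.
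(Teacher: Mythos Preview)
Your proposal is correct and follows essentially the same approach as the paper: the theorem is stated there as a direct application of Theorem~\ref{thm:ASBIintro} (the generic ASBI reduction, formalized as Theorem~\ref{thm:interruptions}) instantiated with the classical AMS oblivious $F_2$ sketch of space $\tilde{O}(1/\alpha^2)$. Your extra bookkeeping on the failure probability is slightly more detailed than needed---Theorem~\ref{thm:interruptions} already absorbs the factor $R$ blowup in the failure probability, and the per-time-step union bound over $m$ is built into the oblivious guarantee of Definition~\ref{def:obliviousFull}---but this only affects logarithmic factors hidden in $\tilde{O}(\cdot)$ and does not change the argument.
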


\paragraph{A negative result for the ASBI model.}
Note that the space blowup of our construction from Theorem~\ref{thm:ASBIintro} grows linearly with the number of interruptions $R$. Recall that in the full adversarial model (where $R=T$ for $T$ queries) it is known that a space blowup of $\sqrt{T}$ suffices (see Theorem~\ref{thm:HKMMS}). Thus, one might guess that the correct dependence in $R$ in the ASBI model should be $\sqrt{R}$. However, we show that this is generally not the case. Specifically, we show that there exists a streaming problem that can easily be solved in the oblivious setting with small space, but necessitates space linear in $R$ in the ASBI model, provided that the number of queries is large enough (polynomial in $R$).

\subsection{Additional Related Works}
The adversarial streaming model (in a setting similar to ours) dates back to at least \cite{AhnGM12}, who studied it implicitly and showed an impossibility result for robust $\ell_0$ sampling in sublinear memory. The adversarial streaming model was then formalized explicitly by \cite{HardtW13}, who showed strong impossibility results for linear sketches. A recent line of work, 
starting with \cite{BenEliezerJWY20} and continuing with \cite{HKMMS-JACM,WZ21,ACSS21,BEO22} showed {\em positive} results (i.e., {\em robust algorithms}) for many problems of interest, under the assumption that the {\em flip-number} of the stream is bounded.
On the negative side, \cite{BenEliezerJWY20} also presented an attack with $O(n)$ number of adaptive rounds on a variant of the AMS sketch, where $n$ is the size of the domain.
Later, \cite{KaplanMNS21} constructed a streaming {\em problem} for which every adversarially-robust streaming algorithm must use polynomial space, thus showing a separation between the oblivious model and the (plain) adversarial model. 
More recently, \cite{DBLP:conf/icml/Cohen0NSSS22} presented an attack on a concrete algorithm, namely \texttt{CountSketch}, that has length that is {\em linear} in the space of the algorithm and is using only two rounds of adaptivity.

\section{Preliminaries}

In this work we consider streaming problems which are defined by a real valued function (in which case the goal is to approximate the value of this function) as well as streaming problems that define set of valid solutions and the goal is to return one of the valid solutions. The following definition unifies these two objectives for the oblivious setting.

\begin{definition}[Oblivious streaming]\label{def:obliviousFull}
Let $X$ be a finite domain and let $g:X^*\rightarrow 2^W$ be a function that maps every input $\vec{x}\in X^*$ to a subset $g(\vec{x})\subseteq W$ of valid solutions (from some range $W$).

Let $\AAA$ be an algorithm that, for $m$ rounds, obtains an element $x_i\in X$ and outputs a response $z_i\in W$. 
Algorithm $\AAA$ is said to be an {\em oblivious streaming algorithm} for $g$ with failure probability $\beta$, and stream length $m$, if the following holds for every input sequence $\vec{x}=(x_1,x_2,\dots,x_m)\in X^m$. Consider an execution of $\AAA$ on the input stream $\vec{x}$. Then, 
$$
\Pr\left[\forall i\in[m] \text{ we have } z_i\in g(x_1,\dots,x_i) \right]\geq1-\beta,
$$
where the probability is taken over the coins of algorithm $\AAA$.
\end{definition}

For example, in the problem of estimating the number of distinct elements in the stream, the function $g$ in the above definition returns the interval $g(x_1,\dots,x_i)=(1\pm\alpha)\cdot|\{x_1,\dots,x_i\}|$, where $\alpha$ is the desired approximation parameter.

\subsection{Preliminaries from Differential Privacy}
Differential privacy \cite{DMNS06} is a mathematical definition for privacy that aims to enable statistical analyses of databases while providing strong guarantees that individual-level information does not leak. Consider an algorithm $\mathcal{A}$ that operates on a database in which every row represents the data of one individual. Algorithm $\mathcal{A}$ is said to be {\em differentially private} if its outcome distribution is insensitive to arbitrary changes in the data of any single individual. Intuitively, this means that algorithm $\mathcal{A}$ leaks very little information about the data of any single individual, because its outcome would have been distributed roughly the same even without the data of that individual. Formally,
\begin{definition}[\cite{DMNS06}]Let $\mathcal{A}$ be a randomized algorithm that operates on databases. Algorithm $\mathcal{A}$ is $(\eps,\delta)$-{\em differentially private} if for any two databases $S,S^{\prime}$ that differ on one row, and any event $T$, we have
$$
\Pr \left[ \mathcal{A}(S)\in T \right] \leq e^{\eps} \cdot \Pr \left[ \mathcal{A}(S^{\prime})\in T \right] + \delta.
$$
\end{definition}

\subsubsection{Privately Approximating the Median of the Data}

\noindent
Given a database $S\in X^*$, consider the task of {\em privately} identifying an {\em approximate median} of $S$. Specifically, for an error parameter $\Gamma$, we want to identify an element $x\in X$ such that there are at least $|S|/2-\Gamma$ elements in $S$ that are larger or equal to $x$, and there are at least $|S|/2-\Gamma$ elements in $S$ that are smaller or equal to $x$. The goal is to keep $\Gamma$ as small as possible, as a function of the privacy parameters $\eps,\delta$, the database size $|S|$, and the domain size $|X|$.

There are several advanced constructions in the literature with error that grows very slowly as a function of the domain size (only polynomially with $\log^*|X|$)~\cite{BNS13b,BNSV15,BunDRS18,KaplanLMNS19,CohenLNSS23}. 
In our application, however, simpler constructions suffice (where the error grows logarithmically with the domain size).
The following theorem can be derived as  an immediate application of the exponential mechanism \cite{MT07}.

\begin{theorem}\label{thm:Pmed}
There exists an $(\eps,0)$-differentially private algorithm that given a database $S\in X^*$ outputs an element $x\in X$ such that with probability at least $1-\beta$ there are at least $|S|/2-\Gamma$ elements in $S$ that are bigger or equal to $x$, and there are at least $|S|/2-\Gamma$ elements in $S$ that are smaller or equal to $x$, where $\Gamma=O\left(\frac{1}{\eps}\log\left(\frac{|X|}{\beta}\right)\right)$.
\end{theorem}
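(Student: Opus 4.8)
The plan is to obtain the statement as an immediate instantiation of the exponential mechanism of McSherry and Talwar~\cite{MT07}. Since $X$ is finite we may assume it carries a fixed total order (identify it with $\{1,\dots,|X|\}$). Write $n:=|S|$, and for an element $x\in X$ define the quality function
$$
q_S(x)\ :=\ \min\Bigl\{\ \bigl|\{i:x_i\ge x\}\bigr|\ ,\ \bigl|\{i:x_i\le x\}\bigr|\ \Bigr\}.
$$
The first thing I would check is a lower bound on the optimum: taking $x$ to be a median element of the multiset $S$ (which lies in $X$ because $S\in X^*$), both of the counts appearing above are at least $\lfloor n/2\rfloor$, so $\max_{x\in X} q_S(x)\ge \lfloor n/2\rfloor$. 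The second thing is the sensitivity of $q_S$ as a function of the database: replacing a single entry of $S$ changes each of the two monotone counts $|\{i:x_i\ge x\}|$ and $|\{i:x_i\le x\}|$ by at most $1$, hence changes their minimum by at most $1$; so $q_S$ has sensitivity $1$ for every fixed $x$.

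The algorithm is then exactly the exponential mechanism with quality function $q_S$: output $x\in X$ with probability proportional to $\exp\!\bigl(\tfrac{\eps}{2}\,q_S(x)\bigr)$. The standard privacy analysis of the exponential mechanism (using that the sensitivity of $q_S$ is $1$) shows this is $(\eps,0)$-differentially private. The standard utility analysis of the exponential mechanism shows that with probability at least $1-\beta$ the returned element $x$ satisfies
$$
q_S(x)\ \ge\ \max_{x'\in X} q_S(x')\ -\ \frac{2}{\eps}\ln\!\Bigl(\frac{|X|}{\beta}\Bigr)\ \ge\ \frac{n}{2}\ -\ \Gamma,
\qquad
\Gamma = O\!\Bigl(\tfrac1\eps\log\tfrac{|X|}{\beta}\Bigr).
$$
Finally, I would just unpack the definition of $q_S$: the inequality $q_S(x)\ge n/2-\Gamma$ says precisely that at least $n/2-\Gamma$ of the $x_i$ are $\ge x$ \emph{and} at least $n/2-\Gamma$ of the $x_i$ are $\le x$, which is exactly the claimed guarantee.

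I do not expect a genuine obstacle here — the statement is a direct corollary of the two black-box properties of the exponential mechanism. The only points that need a little care are (i) confirming that the sensitivity of the $\min$ of the two monotone counts is $1$ (and not $2$), and (ii) the harmless off-by-one issues arising from $\lfloor n/2\rfloor$ versus $n/2$ and from the exact placement of strict versus non-strict inequalities; both of these perturb $\Gamma$ only by an additive constant, which is absorbed into the $O(\cdot)$. If one wants the cleanest constant one simply carries the factor $2\Delta/\eps = 2/\eps$ through the exponential mechanism's tail bound, as done above.
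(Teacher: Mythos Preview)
Your proposal is correct and matches the paper's approach: the paper does not give a proof but simply states that the theorem is an immediate application of the exponential mechanism~\cite{MT07}, which is precisely what you carry out with the natural quality function $q_S(x)=\min\{|\{i:x_i\ge x\}|,|\{i:x_i\le x\}|\}$.
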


\subsubsection{Composition of Differential Privacy}

\noindent
The following theorem allows arguing about the privacy guarantees of an algorithm that accesses its input database using several differentially private mechanisms.

\begin{theorem}[\cite{DRV10}]\label{thm:composition2}
Let $0<\eps,\delta'\leq1$, and let $\delta\in[0,1]$. A mechanism that permits $k$ adaptive interactions with mechanisms that preserve $(\eps,\delta)$-differential privacy (and does not access the database otherwise) ensures $(\eps', k\delta+\delta')$-differential privacy, for $\eps'=\sqrt{2k\ln(1/\delta')}\cdot\eps+2k\eps^2$.
\end{theorem}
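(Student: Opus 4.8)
I would prove this --- the advanced composition theorem of Dwork, Rothblum, and Vadhan --- via the privacy‑loss martingale argument, in four steps. The first step is to reduce to the \emph{pure} case $\delta=0$. Using the standard fact that $(\eps,\delta)$‑differential privacy is equivalent, up to a statistical distance controlled by $\delta$, to $(\eps,0)$‑differential privacy --- concretely, for neighboring databases $S,S'$ the output of each $(\eps,\delta)$‑DP mechanism can be replaced by a distribution that is $(\eps,0)$‑indistinguishable from its counterpart and within statistical distance $O(\delta)$ of the original --- one couples these replacements consistently throughout the adaptive interaction. The coupled executions agree except on a collection of ``bad events'' whose probabilities sum, over the $k$ rounds, to at most $k\delta$. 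Hence it suffices to establish $(\eps',\delta')$‑DP under the assumption that every mechanism invoked in the interaction is $(\eps,0)$‑DP, and then add $k\delta$ to the failure probability at the end.

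Next I would introduce the privacy‑loss random variable. Fix neighboring $S,S'$ and let $\vec{o}=(o_1,\dots,o_k)$ be the transcript produced when the database is $S$; since the $i$‑th mechanism is a deterministic function of the prefix $o_{<i}$, the log‑likelihood ratio of the whole transcript telescopes, $\ln\frac{\Pr_S[\vec{o}]}{\Pr_{S'}[\vec{o}]}=\sum_{i=1}^{k}L_i$ with $L_i=\ln\frac{\Pr_S[o_i\mid o_{<i}]}{\Pr_{S'}[o_i\mid o_{<i}]}$. Two properties are needed. First, pure $(\eps,0)$‑DP of each mechanism gives the deterministic pointwise bound $|L_i|\le\eps$. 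Second, conditioned on $o_{<i}$ and taking the expectation over $o_i$ drawn from the $S$‑distribution, $\E[L_i\mid o_{<i}]$ is the KL divergence between the two output distributions of the $i$‑th mechanism, and the standard bound on the KL divergence between $(\eps,0)$‑indistinguishable distributions gives $\E[L_i\mid o_{<i}]\le 2\eps^2$ whenever $\eps\le 1$, regardless of $o_{<i}$.

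Now concentrate. Set $M_i:=L_i-\E[L_i\mid o_{<i}]$; this is a martingale difference sequence, and since $L_i\in[-\eps,\eps]$ each $M_i$ lies in an interval of width $2\eps$ with conditional mean zero, so Hoeffding's lemma gives $\E[e^{\lambda M_i}\mid o_{<i}]\le e^{\lambda^2\eps^2/2}$, and iterating the conditioning yields $\E[e^{\lambda\sum_i M_i}]\le e^{k\lambda^2\eps^2/2}$. A Chernoff bound optimized over $\lambda>0$ gives $\Pr[\sum_i M_i>t]\le e^{-t^2/(2k\eps^2)}$, and choosing $t=\eps\sqrt{2k\ln(1/\delta')}$ makes the right‑hand side exactly $\delta'$. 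Combining with the deterministic bound $\sum_i\E[L_i\mid o_{<i}]\le 2k\eps^2$, we obtain that with probability at least $1-\delta'$ the total privacy loss $\sum_i L_i$ is at most $\eps':=2k\eps^2+\eps\sqrt{2k\ln(1/\delta')}$. Finally, the standard lemma that a privacy loss bounded by $\eps'$ except with probability $\delta'$ implies $(\eps',\delta')$‑DP --- split any event $T$ into its intersection with $\{\sum_i L_i\le\eps'\}$, on which outcome probabilities are pointwise within a factor $e^{\eps'}$, and with its complement, of probability at most $\delta'$ --- finishes the pure case; folding in the $k\delta$ slack from the first step gives $(\eps',k\delta+\delta')$‑DP, as claimed.

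\textbf{Main obstacle.} The delicate part is the first step: making the ``$(\eps,\delta)\Rightarrow(\eps,0)$ up to $O(\delta)$ statistical distance'' replacement rigorous and, more importantly, coupling it across the adaptive rounds so that the slacks add up to at most $k\delta$ rather than interacting badly. The remaining steps are essentially bookkeeping, though one must be slightly careful with the KL‑divergence and Hoeffding constants to land exactly on $\sqrt{2k\ln(1/\delta')}\,\eps+2k\eps^2$. An alternative route is to pass through Rényi / (zero‑)concentrated differential privacy --- each $(\eps,0)$‑DP mechanism satisfies a Rényi‑DP guarantee, Rényi‑DP composes additively, and one converts back to an $(\eps',\delta')$‑DP statement --- but handling $\delta>0$ in the input mechanisms still requires the reduction above, so I would keep the direct martingale argument.
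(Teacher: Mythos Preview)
The paper does not prove this theorem; it is quoted in the preliminaries as a known result from \cite{DRV10} and used as a black box (in Lemma~\ref{lem:privacy}). Your sketch is the standard privacy-loss martingale argument from that reference and is correct, so there is nothing to compare against here.
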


\subsubsection{The Generalization Properties of Differential Privacy}

\noindent
Dwork et al.~\cite{DFHPRR14} and Bassily et al.~\cite{BassilyNSSSU16} showed that if a predicate $h$ is the result of a differentially private computation on a random sample, then the empirical average of $h$ and its expectation over the underlying distribution are guaranteed to be close. Formally,

\begin{theorem}[\cite{DFHPRR14,BassilyNSSSU16,StemmerPhD}]\label{thm:DPgeneralization}
Let $\eps \in (0,1/3)$, $\delta \in (0,\eps/4)$ and $n\geq\frac{1}{\eps^2}\log(\frac{2\eps }{\delta})$.
Let $\AAA:X^n\rightarrow2^X$ be an $(\eps,\delta)$-differentially private algorithm that operates on a database of size $n$ and outputs a predicate $h:X\rightarrow\{0,1\}$. Let $\DDD$ be a distribution over $X$, let $S$ be a database containing $n$ i.i.d.\ elements from $\DDD$, and let $h\leftarrow\AAA(S)$. Then 
$$
\Pr_{\substack{S\sim\DDD^n \\ h\leftarrow \AAA(S)}}\left[ 
\left| \frac{1}{|S|}\sum_{x\in S}h(x) - \E_{x\sim\DDD}[h(x)] \right|
\geq 10\eps \right] < \frac{\delta}{\eps}.
$$
\end{theorem}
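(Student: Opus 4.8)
The plan is the standard two-step route from differential privacy to generalization: an in-expectation (bias) bound, followed by a boosting step to the stated high-probability bound. Write $\Delta(S,h):=\tfrac{1}{|S|}\sum_{x\in S}h(x)-\E_{x\sim\DDD}[h(x)]$. Since running $\AAA$ and returning the complemented predicate $1-h$ is a post-processing of $\AAA$ (hence still $(\eps,\delta)$-DP) and $\Delta(S,1-h)=-\Delta(S,h)$, it suffices to prove the one-sided bound $\Pr_{S,h}[\Delta(S,h)\ge\tau]<\tfrac{\delta}{2\eps}$ for $\tau=10\eps$ and then union over the two tails. For the bias bound $\E_{S,h}[\Delta(S,h)]\le(e^{\eps}-1)+\delta$ I would use a one-coordinate resampling argument: fix $i\in[n]$, condition on the other $n-1$ points $S_{-i}$, and draw a fresh $x_i'\sim\DDD$; the databases obtained by putting $x_i$ resp.\ $x_i'$ in slot $i$ are neighbouring, so applying $(\eps,\delta)$-DP to the $[0,1]$-valued post-processing ``evaluate the output predicate at $x_i$'' gives $\E[h(x_i)\mid S_{-i}]\le e^{\eps}\,\E_{x_i'}\E_{h\leftarrow\AAA(S_{-i},x_i')}[h(x_i)]+\delta$, and since on the right-hand side $x_i$ is independent of $h$, the inner expectation equals $\E[\E_{x\sim\DDD}[h(x)]\mid S_{-i}]$; averaging over $i$ and over $S$ yields the claim. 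This step is oblivious to the lower bound on $n$ and to $\delta<\eps/4$, and incurs only a \emph{single} additive $\delta$ (the $n$ per-coordinate $\delta$'s are averaged, not summed).

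The second and harder ingredient is to upgrade this $O(\eps)$-in-expectation bound to a $\tfrac{\delta}{2\eps}$-probability bound at the cost of a constant-factor loss in the threshold. For a \emph{pure} $(\eps,0)$-DP algorithm this is direct: the same swap, applied to the bounded functionals $v\mapsto\prod_{i\in I}v(x_i)$, yields $\E[\prod_{i\in I}h(x_i)]\le e^{|I|\eps}\,\E[(\E_{x\sim\DDD}h(x))^{|I|}]$ for every $I\subseteq[n]$, so (including the centering factor $e^{-\lambda n\E_{x}h(x)}$) the moment generating function $\E\big[e^{\lambda(\sum_{x\in S}h(x)-n\E_{x}h(x))}\big]$ is at most $\exp\!\big(n\,((e^{\lambda}-1)e^{\eps}-\lambda)\big)$; optimizing $\lambda$ gives $\Pr[\Delta(S,h)\ge\tau]\le e^{-\Omega(n(\tau-\eps)^2)}$, and with $n\ge\eps^{-2}\log(2\eps/\delta)$, $\tau=10\eps$ and $\delta<\eps/4$ this is below $\tfrac{\delta}{2\eps}$. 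For a genuinely $(\eps,\delta)$-DP algorithm this MGF expansion is useless — the $\delta$ from each of its $2^n$ applications accumulates catastrophically — so I would instead follow the monitor/amplification technique of Bassily et al.: run $\AAA$ on $k$ \emph{disjoint} sub-samples $S^{(1)},\dots,S^{(k)}$ of size $n$ each to get $h^{(\ell)}\leftarrow\AAA(S^{(\ell)})$, and then privately select — via the exponential mechanism (cf.\ \thmref{Pmed}), among the options $h^{(1)},\dots,h^{(k)}$ together with a null predicate of score $0$, scored by the quantities $\Delta(S^{(\ell)},h^{(\ell)})$, which are $1/n$-sensitive \emph{once the $h^{(\ell)}$ are fixed} — the predicate with the largest empirical deviation. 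Because a single row of the combined sample lies in exactly one block, the ``run $\AAA$ on each block'' stage is $(\eps,\delta)$-DP with respect to the \emph{combined} sample with \emph{no} blow-up in $k$, and composing with the private selection keeps it $(\eps+\eps_1,\delta)$-DP; combining the bias bound for this composite with a Chernoff bound over the $k$ blocks, and trading off $k$, $\eps_1$ and the selection error against the lower bound on $n$, then forces $\Pr_{S,h}[\Delta(S,h)\ge\tau]<\tfrac{\delta}{2\eps}$.

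I expect the boosting step to be the main obstacle. The delicate points are: making ``select the worst-generalizing output'' differentially private with only $O(1/n)$-sensitive scores, which is what forces us to score the $h^{(\ell)}$ only \emph{after} they have been produced; arranging that privacy does not degrade with the number $k$ of blocks, which is exactly why the data must be split into \emph{disjoint} blocks; and — above all — the quantitative bookkeeping, where one has to make the ``pure-DP-like'' main term decay like $e^{-\Omega(n\eps^{2})}=(\delta/\eps)^{\Omega(1)}$ while keeping every error term arising from the approximate-DP slack $\delta$, from the private selection, and from the sampling fluctuations on the non-selected blocks genuinely below $O(\delta/\eps)$. This last point is precisely where the hypotheses $\delta<\eps/4$ and $n\ge\eps^{-2}\log(2\eps/\delta)$ are consumed.
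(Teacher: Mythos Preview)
The paper does not prove this theorem at all: it is stated in the Preliminaries (Section~2.1.3) as a known result imported from \cite{DFHPRR14,BassilyNSSSU16,StemmerPhD}, with no proof or proof sketch, and is only \emph{used} later (in the proof of Theorem~\ref{thm:ASA-bigTheorem}). So there is no ``paper's own proof'' to compare against.

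That said, your outline is essentially the argument of the cited works, in particular the monitor technique of Bassily et~al.: an in-expectation transfer bound via a single-coordinate swap, then amplification by running $\AAA$ on $k$ disjoint blocks and privately selecting the block with the worst empirical deviation, using that disjoint blocks give $(\eps,\delta)$-DP on the combined sample with no blow-up in $k$. Your identification of where the hypotheses $\delta<\eps/4$ and $n\ge\eps^{-2}\log(2\eps/\delta)$ are spent, and of the sensitivity issue that forces scoring the $h^{(\ell)}$ only after they are produced, is accurate. One caveat: the intermediate ``pure-DP MGF'' paragraph is not load-bearing for the actual result and the displayed inequality $\E\bigl[\prod_{i\in I}h(x_i)\bigr]\le e^{|I|\eps}\,\E\bigl[(\E_{x\sim\DDD}h(x))^{|I|}\bigr]$ is not as immediate as you suggest, since swapping one coordinate also changes $h$ and hence the remaining factors; the clean route even in the pure case is the same monitor argument with $\delta=0$, rather than a direct MGF expansion.
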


\section{Adversarial Streaming with Advice (ASA)}

In this section we present our results for the ASA model, defined in Section~\ref{sec:ASAintro}. We begin with our generic transformation.

\subsection{A Generic Construction for the ASA Model}

\begin{algorithm*}[t!]
\caption{\bf \texttt{RobustAdvice}($\beta,m,\eta$)}
\begin{flushleft}
{\bf Input:} Parameters: $\beta$ is the failure probability, $m$ is the length of input stream and $\eta$ is the advice query cycle.

{\bf Algorithm used:} An oblivious linear streaming algorithm $\AAA$ with space $s$ for $\alpha$ accuracy.

{\bf Constants calculation:}
\begin{enumerate}
[leftmargin=25pt,rightmargin=10pt,itemsep=1pt,topsep=0pt]
	
	\item $k=\Omega(\eta s \log(m/\beta)\log^2(m/(\beta\alpha)))$ is the number of instances of each of the sets `active', `next' and `shadow'.
	
	\item $\eps_0=\frac{\eps}{\sqrt{8\eta k s\ln(1/\delta)}}$ is the privacy parameter of \texttt{PrivateMed} executions, where $\eps = 1/100$, $\delta = O(\beta/m)$.
	
\end{enumerate} 

\end{flushleft}

\hdashrule[0.5ex][x]{\linewidth}{0.5pt}{1.5mm}

\begin{enumerate}[leftmargin=15pt,rightmargin=10pt,itemsep=1pt,topsep=3pt]
\item 
Initialize $k$ independent instances $\AAA^{\rm active}_1,\dots,\AAA^{\rm active}_k$ of algorithm $\AAA$.

\item\label{step:while} REPEAT (outer loop)

\begin{enumerate}
	\item Initialize $k$ independent instances $\AAA^{\rm next}_1,\dots,\AAA^{\rm next}_k$ of algorithm $\AAA$. \label{algstep:outerLoopInit}
	
	\item Let $\AAA^{\rm shadow}_1,\dots,\AAA^{\rm shadow}_k$ be duplicated copies of $\AAA^{\rm next}_1,\dots,\AAA^{\rm next}_k$, where each $\AAA^{\rm shadow}_j$ is initiated with the same randomness as $\AAA^{\rm next}_j$.
	\label{algstep:outerLoopDuplicate}
	
	\item Denote the current time step as $t$. (That is, so far we have seen $t$ updates in the stream.)
	
	\item REPEAT (inner loop)

	\begin{enumerate}
		\item Receive next update $x_i$ and a query demand $q_i\in\{0,1\}$.
		
		\item Insert update $x_i$ into each of $\AAA^{\rm active}_1,\AAA^{\rm next}_1,\dots,\AAA^{\rm active}_k,\AAA^{\rm next}_k$. 
		
		\item If $q_i=1$ then:
		
		\begin{itemize}
		    \item Query $\AAA^{\rm active}_1,\AAA^{\rm active}_2,\dots,\AAA^{\rm active}_k$ and obtain answers $y_{i,1},y_{i,2},\dots,y_{i,k}$ 
		    
		    \item Output $z_i\leftarrow\texttt{PrivateMed}(y_{i,1},y_{i,2},\dots,y_{i,k})$ with privacy parameter $\eps_0$.
		    
		    \item If $\left(\sum_{j=1}^{i} q_j\right){\rm mod}\,\eta=0$ then define the predicate $p_i$ that given a (prefix of a) stream $\vec{u}$ returns the next bit in the inner state of $(\AAA^{\rm shadow}_1,\AAA^{\rm shadow}_2,\dots,\AAA^{\rm shadow}_k)$ after processing the first $t$ updates in $\vec{u}$. Update the corresponding bit in the state of the corresponding $\AAA^{\rm shadow}_j$.
		    
		    \item If $\left(\sum_{j=1}^{i} q_j\right){\rm mod}\,\eta k s=0$ then EXIT inner loop. Otherwise, CONTINUE inner loop.
		    
		\end{itemize}

	\end{enumerate}
	
	\item For $j\in[k]$ let $\AAA^{\rm active}_j\leftarrow\AAA^{\rm next}_j+\AAA^{\rm shadow}_j$. \label{alg:stateRecovery}
	
\end{enumerate}

\end{enumerate}
\end{algorithm*}

Our generic construction for the ASA model transforms an oblivious and {\em linear} streaming algorithm $\AAA$ into a robust streaming algorithm in the ASA model. The linearity property that we need is the following. Suppose that three copies of $\AAA$, call them $\AAA_1, \AAA_2, \AAA_3$, are instantiated with the {\em same} internal randomness $r$, and suppose that $\AAA_1$ processes a stream $\vec{x}_1$ and that $\AAA_2$ processes a stream $\vec{x}_2$ and that $\AAA_3$ processes the stream $\vec{x}_1 \circ \vec{x}_2$. Then there is an operation (denote it as ``+'') that allows us to obtain an internal state $(\AAA_1 + \AAA_2)$ that is identical to the internal state of $\AAA_3$. Many classical streaming algorithms have this property (for example, the classical AMS sketch for $F_2$ has this property~\cite{AlonMS99}). 
Formally,

\begin{definition}[Linear state algorithm] Let $\mathcal{A}$ be an algorithm with the three segments of memory state. The first segment is randomized in the beginning of the algorithm and remain fixed throughout its execution and denoted as $\mathcal{S}_R$. The second segment is an encoding of vector in $\R^{d}$, and denoted $\mathcal{S}_v$. The third segment is the rest of its memory space and denoted as $\mathcal{S}_c$ and is used for other computations. Then, $\mathcal{A}$ is {\em linear state} w.r.t.~its input stream if for any two streams $\vec{u}_1=((x_1,q_1),\dots,(x_l,q_l))\in (X\times\{0,1\})^l$, $\vec{u}_2=((x_1,q_1),\dots,(x_p,q_p))\in (X\times\{0,1\})^p$ with length of $l,p\in \N$ and three different executions of $\mathcal{A}$ with the same randomized state ($\mathcal{S}_R$) the following holds:
$$\mathcal{S}_v(\vec{u}_1 \circ \vec{u}_2) = \mathcal{S}_v(\vec{u}_1)+\mathcal{S}_v(\vec{u}_2)$$
Where $\mathcal{S}_v(\vec{u})$ is the encoded vector $v\in \R^d$ resulting from the input stream $\vec{u}$ encoded in the corresponding memory state.
\end{definition}

Consider algorithm \texttt{RobustAdvice}. 
In the beginning of each iteration of the outer loop, algorithm \texttt{RobustAdvice} samples $k$ fresh random strings, with which it instantiates the corresponding {\em next} and {\em shadow} copies of algorithm $\AAA$ for that outer loop iteration. Denote by $\tau$ the number of such outer loops ($\tau \leq m$). Instead of sampling these random strings, let us imagine that algorithm \texttt{RobustAdvice} gets these strings as inputs in the following format:
$$\vec{r}=\left(\vec{r^1},\vec{r^2},\dots,\vec{r^{\tau}}\right)=\left( ( r^1_1,\dots,r^1_k) , ( r^2_1,\dots,r^2_k) ,\dots, ( r^{\tau}_1,\dots,r^{\tau}_k) \right),$$
where $\vec{r^j}=( r^j_1,\dots,r^j_k)$ for some $j\in [\tau]$ is used as the random strings for the iteration of the outer loop for the $k$ instances initialized on Step $\ref{algstep:outerLoopInit}$ and their $k$ duplicates at step \ref{algstep:outerLoopDuplicate}. 
Denote by $(t_1<\dots<t_{\tau})$ the times that each outer loop began (Step $\ref{algstep:outerLoopInit}$) and by $\vec{x}_t$ the input stream till time $t$.
Then the following lemma states that at the beginning of the outer loop iteration (say, time $t_j$ for some $j\in [\tau]$) the instances of $\mathcal{A}^{\rm active}_i$ for $i\in [k]$ consist of the state that corresponds to the input stream in time $t_j$, $\vec{x}_{t_j}$, for a random string $r^{t_{j-1}}_i$. That is, Step~\ref{alg:stateRecovery} has successfully recovered the state of instance $i$ of algorithm $\mathcal{A}^{\rm active}$ w.r.t. $r_i^{t_{j-1}}$ for input $\vec{x}_{t_j}$.

\begin{lemma}[State recovery]\label{lem:stateRecovery}
Denote by $\tau$ the number of outer loops, by $(t_1<\dots<t_{\tau})\in [m]^{\tau}$ the times of the beginning of each outer loop (Step~\ref{algstep:outerLoopInit}) and by $\vec{x}_t$ the input stream till time $t$. Then, for all $j\in \{1,\dots,\tau-1\}$, for all $i\in [k]$ algorithm instance $\mathcal{A}^{\rm active}_i$ consist of state segment $\mathcal{S}_v(\vec{x}_{t_j})$ corresponding to state segment $\mathcal{S}_{R}(r_i^{j-1})$.
\end{lemma}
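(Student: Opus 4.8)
The plan is to prove the identity directly for each $j$; no induction is needed, since for $j\ge 2$ the instances $\AAA^{\rm active}_1,\dots,\AAA^{\rm active}_k$ present at time $t_j$ are exactly the output of Step~\ref{alg:stateRecovery} in outer-loop iteration $j-1$, and that step combines only the \emph{next} and \emph{shadow} copies, which iteration $j-1$ creates afresh at its start, so everything about their contents at time $t_j$ can be read off from iteration $j-1$ alone. I would dispatch $j=1$ first: the active instances were just created in Step~1 with fresh randomness (name it $r_i^0$), and no updates precede the first outer-loop iteration (so $t_1=0$ and the processed prefix is empty), hence they trivially carry $\SSS_v(\vec{x}_{t_1})$ w.r.t.\ $\SSS_R(r_i^0)$.

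Then I would fix $j\in\{2,\dots,\tau-1\}$ and unwind iteration $j-1$. By the definitions of $t_{j-1}$ and $t_j$, the inner loop of that iteration receives exactly the updates $x_{t_{j-1}+1},\dots,x_{t_j}$, each of which is inserted into $\AAA^{\rm next}_i$ (created in Step~\ref{algstep:outerLoopInit} with randomness $r_i^{j-1}$); hence at time $t_j$ its vector segment is $\SSS_v\bigl((x_{t_{j-1}+1},\dots,x_{t_j})\bigr)$ w.r.t.\ $\SSS_R(r_i^{j-1})$. The copy $\AAA^{\rm shadow}_i$ is a duplicate of $\AAA^{\rm next}_i$ (Step~\ref{algstep:outerLoopDuplicate}), hence carries the same randomness $r_i^{j-1}$, and the advice predicates write into it, one bit every time the running query count is divisible by $\eta$, the successive bits of the concatenated internal state that $(\AAA^{\rm shadow}_1,\dots,\AAA^{\rm shadow}_k)$ would have after processing the first $t_{j-1}$ updates of the stream. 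The point that needs care is the budget: since $j-1\le\tau-2$, iteration $j-1$ is followed by at least one further outer-loop iteration, so its inner loop terminates via the condition ``running query count is a multiple of $\eta k s$'' (and not because the stream ran out); therefore it consumes exactly $\eta k s$ queries, exactly $ks$ of which trigger an advice bit. Since the $k$ shadow copies together occupy exactly $ks$ bits, all of them are overwritten by time $t_j$, so the vector segment of $\AAA^{\rm shadow}_i$ at time $t_j$ is $\SSS_v(\vec{x}_{t_{j-1}})$ w.r.t.\ $\SSS_R(r_i^{j-1})$.

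Finally I would combine the two using the linear-state property. Step~\ref{alg:stateRecovery} makes the vector segment of $\AAA^{\rm active}_i$ equal to the sum of the vector segments of $\AAA^{\rm next}_i$ and $\AAA^{\rm shadow}_i$, both of which share the randomized segment $\SSS_R(r_i^{j-1})$; applying the definition of a linear-state algorithm with $\vec{u}_1=\vec{x}_{t_{j-1}}$ and $\vec{u}_2=(x_{t_{j-1}+1},\dots,x_{t_j})$ gives
$$\SSS_v\bigl((x_{t_{j-1}+1},\dots,x_{t_j})\bigr)+\SSS_v(\vec{x}_{t_{j-1}})=\SSS_v\bigl(\vec{x}_{t_{j-1}}\circ(x_{t_{j-1}+1},\dots,x_{t_j})\bigr)=\SSS_v(\vec{x}_{t_j}),$$
all w.r.t.\ $\SSS_R(r_i^{j-1})$, which is exactly the claim for $j$. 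I expect the only genuine obstacle to be the bookkeeping in the middle paragraph — lining up the number of queries that one inner loop consumes with the number of advice bits it collects, and hence with the total bit-length of the $k$ shadow states — and this is precisely what makes the hypothesis $j\le\tau-1$ convenient, since in the last outer loop the inner loop may be truncated by the end of the stream and the shadow copies need not be fully recovered. Everything else is a substitution into the linear-state identity.
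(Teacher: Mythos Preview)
Your proposal is correct and follows essentially the same route as the paper's proof: fix an outer-loop iteration, observe that the \emph{next} copies accumulate the stream segment $(x_{t_{j-1}+1},\dots,x_{t_j})$ while the advice bits fill the \emph{shadow} copies with the state for $\vec{x}_{t_{j-1}}$, then apply the linear-state identity in Step~\ref{alg:stateRecovery}. You are somewhat more explicit than the paper in two places---you separate out the trivial case $j=1$ (which the paper's proof treats only implicitly via an undefined $t_0$), and you spell out why $j\le\tau-1$ guarantees that the inner loop consumes a full $\eta k s$ queries and hence delivers all $ks$ advice bits (the paper just asserts ``the inner loop is of size at least $\eta k s$'')---but these are refinements of the same argument, not a different one.
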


\begin{proof} Fix $j\in \{1,\dots,\tau-1\}$. In time $t_{j-1}$ the strings $r_i^{t_{j-1}}$ was used for the randomization of both $\mathcal{A}^{\rm shadow}_i, \mathcal{A}^{\rm next}_i$ for $i\in [k]$. We focus on the execution of the inner loop, that is time steps $t\in [t_{j-1}, t_{j})$. Throughout the execution of time steps $t\in [t_{j-1}, t_{j})$, the state $\mathcal{S}_v$ of algorithms $\mathcal{A}^{\rm next}$ are updated by the stream update of these time steps. 
In addition, during time steps $t\in [t_{j-1}, t_{j})$, algorithm \texttt{RobustAdvice} issues advice queries that are corresponding to the input stream $\vec{x}_{t_{j-1}}$ with randomization strings $\vec{r^{j-1}}$, where each query recovers additional bit from the $k$ $\mathcal{S}_v$-states of the $\mathcal{A}^{\rm shadow}$ instances. 
Now, since the inner loop is of size at least $\eta\cdot k \cdot s$ where $s$ is the number of bits of the state $\mathcal{S}_v$, then during time steps $t\in [t_{j-1}, t_{j})$ all of the $s$ bits of state $\mathcal{S}_v$ of all of the $k$ instances of $\mathcal{A}^{\rm shadow}$ are recovered via these advice queries. 
That is, each of the instances $\mathcal{A}^{\rm shadow}_i$ consist of $\mathcal{S}_v$ corresponding to $\vec{x}_{t_{j-1}}$ with randomization string $r^{j-1}_i$, for $i\in [k]$.
Recall that for $i\in [k]$ each of the pairs $(\mathcal{A}^{\rm shadow}_i,\mathcal{A}^{\rm next}_i)$ have state $\mathcal{S}_v$ corresponding to the same randomization $r^{j-1}_i$.
And so by the linearity of algorithm $\mathcal{A}$ we have that summing the state $\mathcal{S}_v$ for each of the pairs $(\mathcal{A}^{\rm shadow}_i,\mathcal{A}^{\rm next}_i)$ in Step~\ref{alg:stateRecovery} resulting in state $\mathcal{S}_v$ of the instances $\mathcal{A}^{\rm active}$ corresponding to the input stream $\vec{x}_{t_j}$. 
That is, we have that all of the $\mathcal{A}^{\rm active}$ instances have recovered their $\mathcal{S}_v$ for time $t_j$.
\end{proof}

\begin{lemma}\label{lem:privacy} Let $\tau$ be the number of outer loops of algorithm \texttt{RobustAdvice} and denote by $\vec{r^{j}}$ for $j\in [\tau]$ the random strings vector of the outer loop number $j$.
Then for every $j\in[\tau]$, algorithm \texttt{RobustAdvice} is $(\eps,\delta)$-differential privacy w.r.t.\ $\vec{r^j}$.
\end{lemma}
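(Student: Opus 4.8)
The plan is to regard the full interaction between \texttt{RobustAdvice} and the adversary as a single interactive mechanism whose ``database'' is the block of random strings $\vec{r^j}=(r^j_1,\dots,r^j_k)$, viewed as a database with one row per index $i\in[k]$, and to show that the adversary's view — the sequence of responses $z_1,\dots,z_m$ — is produced by an adaptive composition of at most $\eta k s$ calls to \texttt{PrivateMed}, each $(\eps_0,0)$-differentially private with respect to $\vec{r^j}$, followed by post-processing that does not depend on $\vec{r^j}$.

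First I would localize where $\vec{r^j}$ can influence the transcript. By Lemma~\ref{lem:stateRecovery} and the linearity of $\AAA$, the strings in $\vec{r^j}$ are exactly those used, after the state recovery of Step~\ref{alg:stateRecovery}, as the internal randomness of $\AAA^{\rm active}_1,\dots,\AAA^{\rm active}_k$ throughout outer loop number $j+1$ (and if $j=\tau$ they influence no response at all, so the claim is immediate in that case). Every instance that is queried in any other outer loop is instantiated from a different block $\vec{r^{j'}}$ with $j'\neq j$, and the \texttt{next}/\texttt{shadow} copies created from $\vec{r^j}$ inside outer loop $j$ only feed internal advice requests and are never queried. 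Conditioning on all other random strings and on the adversary's coins, this lets me split the transcript into three pieces: (i) the prefix produced before outer loop $j+1$ begins, which is a fixed function of quantities independent of $\vec{r^j}$; (ii) the suffix produced after outer loop $j+1$ ends, which is a randomized function of the transcript so far together with the fresh strings $\vec{r^{j+1}},\vec{r^{j+2}},\dots$ and fresh \texttt{PrivateMed} coins, all independent of $\vec{r^j}$, hence pure post-processing; and (iii) the responses emitted during outer loop $j+1$, namely $z_i=\texttt{PrivateMed}(y_{i,1},\dots,y_{i,k})$, of which there are at most $\eta k s$ (one inner loop issues exactly $\eta k s$ queries before exiting).

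Next I would establish the sensitivity of each such \texttt{PrivateMed} call and then compose. Fixing the previously released responses $z_{<i}$ (and the adversary's coins), the entire stream seen up to query $i$ is a determined quantity, so $y_{i,\ell}$ depends only on $r^j_\ell$; therefore changing one row $r^j_{i_0}$ of the database alters only $y_{i,i_0}$ and leaves every other $y_{i,\ell}$ untouched, i.e.\ it changes the input of the $i$-th call to \texttt{PrivateMed} in exactly one row. Since that call is run with privacy parameter $\eps_0$ (Theorem~\ref{thm:Pmed}), and differential privacy is preserved under a pre-processing map that respects the neighboring relation, each such call is $(\eps_0,0)$-differentially private with respect to $\vec{r^j}$ for every fixing of $z_{<i}$. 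These at most $\eta k s$ calls thus form an adaptive composition of $(\eps_0,0)$-DP mechanisms; applying Theorem~\ref{thm:composition2} with $\delta'=\delta$ and $\eta k s$ interactions, the composed map is $(\eps',\delta)$-DP with $\eps'=\sqrt{2\eta k s\ln(1/\delta)}\cdot\eps_0+2\eta k s\eps_0^2$. Substituting $\eps_0=\eps/\sqrt{8\eta k s\ln(1/\delta)}$ gives $\eps'=\eps/2+\eps^2/(4\ln(1/\delta))\le\eps$ for $\eps=1/100$ and $\delta$ sufficiently small, and post-processing (pieces (i) and (ii)) extends the bound to the whole transcript.

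The step I expect to be the main obstacle is making the post-processing claims in pieces (i) and (ii) fully rigorous in the presence of the adaptive feedback loop (the adversary's future updates depend on past outputs, which depend on $\vec{r^j}$). The clean way to handle this — which I would write out carefully — is to fix the adversary's randomness and phrase the joint experiment as an interactive process in which, at every time step, the only randomized operation that touches the database $\vec{r^j}$ is a single call to \texttt{PrivateMed}, so that the adaptive composition theorem applies verbatim and the portions before and after outer loop $j+1$ are genuinely database-independent pre- and post-processing. A secondary point that needs checking is that the advice predicates $p_i$ and the advice bits are internal to \texttt{RobustAdvice} and never enter the adversary's view, so that the transcript really is exactly $(z_1,\dots,z_m)$.
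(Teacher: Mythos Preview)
Your proposal is correct and follows the same approach as the paper: bound the number of \texttt{PrivateMed} calls that touch $\vec{r^j}$ by $\eta k s$, apply adaptive composition (Theorem~\ref{thm:composition2}) with the chosen $\eps_0$, and conclude $(\eps,\delta)$-DP. Your write-up is in fact more careful than the paper's short proof---you correctly localize the use of $\vec{r^j}$ to the active copies in outer loop $j{+}1$ (the paper's phrasing ``throughout time steps $t\in[t_j,t_{j+1})$ the algorithm is using $\vec{r^j}$ exclusively'' is off by one loop) and you spell out the pre-/post-processing and one-row sensitivity arguments that the paper leaves implicit.
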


\begin{proof} Denote by $\tau$ the number of outer loops and denote by $t_1, ..., t_{\tau}$ the time steps that the algorithm executes \ref{algstep:outerLoopInit}. In each such time $t_j\in \{t_1, ..., t_{\tau}\}$ algorithm \texttt{RobustAdvice} uses a new database: throughout time steps $t\in [t_j, t_{j+1})$ the algorithm is using $\vec{r^{j}}$ exclusively. Now, fix $j\in [\tau]$. For time steps $t\in [t_j, t_{j+1})$ algorithm \texttt{RobustAdvice} is executing $\eta k s$ times the \texttt{PrivateMed} mechanism. By composition (Theorem \ref{thm:composition2}) and \texttt{PrivateMed} guarantee (Theorem \ref{thm:Pmed}), selecting $\eps_{0}=\eps/\sqrt{8\eta k s \log (1/\delta)}$ assures that each inner loop $j\in [\tau]$ is $(\eps,\delta)$-DP w.r.t.~its exclusive database $\vec{r^{j}}$.
\end{proof}

\begin{theorem}[Algorithm \texttt{RobustAdvice} is robust]\label{thm:ASA-bigTheorem} Denote by $\vec{x}_{t}$ the input stream till time $t$. Provided that $\mathcal{A}$ is an oblivious linear algorithm for a real valued function $g$, s.t.~w.p. at least $9/10$ it is accurate for all $t\in [m]$ in the oblivious setting (i.e.~it gives estimations $\hat{g}(\vec{x}_t)\in (1\pm \alpha)\cdot g(\vec{x}_t)$), then w.p.~at least $1-\beta$ for all $t\in [m]$ we have:
$$z_t \in (1\pm \alpha)\cdot g(\vec{x}_t) \text{.}$$
\end{theorem}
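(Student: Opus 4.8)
The plan is to follow the ``differential privacy protects the internal randomness'' paradigm of Hassidim et al.~\cite{HKMMS-JACM}. First I would record a purely deterministic invariant coming from Lemma~\ref{lem:stateRecovery}: at the start of every outer loop iteration $j$ (time $t_j$) the active copies have been correctly reset, so each $\mathcal{A}^{\rm active}_\ell$ holds exactly the state of a fresh copy of $\mathcal{A}$ initialized with the random string $r^{j-1}_\ell$ and run on the current prefix $\vec{x}_{t_j}$. Since during the inner loop these copies merely keep absorbing the incoming updates, at every query time $i$ inside outer loop $j$ the copy $\mathcal{A}^{\rm active}_\ell$ is in the exact state that $\mathcal{A}$ with randomness $r^{j-1}_\ell$ would reach on $\vec{x}_i$; hence the queried value $y_{i,\ell}$ equals the estimate that oblivious execution would output, and $y_{i,\ell}\in(1\pm\alpha)g(\vec{x}_i)$ iff that execution is accurate at time $i$.

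Next I would argue \emph{freshness} of the active randomness. The strings $r^{j-1}_1,\dots,r^{j-1}_k$ driving the active copies of outer loop $j$ were sampled during the previous outer loop (Step~\ref{algstep:outerLoopInit}) and, up to the moment outer loop $j$ begins, were used only to build the ``next'' and ``shadow'' copies — which are never queried — while the advice bits that depend on them are delivered to the streaming algorithm, not to the adversary. Therefore at time $t_j$ the vector $\vec{r^{j-1}}=(r^{j-1}_1,\dots,r^{j-1}_k)$ is distributed as $k$ i.i.d.\ fresh random strings of $\mathcal{A}$, independent of the adversary's view so far and hence of the stream $\vec{x}_{t_j}$. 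During outer loop $j$, however, the outputs $z_i$ \emph{are} produced from these strings (via \texttt{PrivateMed} on the $y_{i,\ell}$) and the adversary adapts the subsequent updates to them; this is exactly the adaptivity that privacy must neutralize, and Lemma~\ref{lem:privacy} gives that the whole interaction during outer loop $j$ is $(\eps,\delta)$-differentially private as a function of the database $\vec{r^{j-1}}$, with $\eps=1/100$, $\delta=O(\beta/m)$.

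The core step is then the generalization argument. Fix a query time $i$ inside outer loop $j$ and consider the mechanism that, on input the database $\vec{r^{j-1}}$, simulates the ASA interaction from time $t_j$ onward (the adversary, whose moves are a randomized function of the transcript, absorbed as post-processing) up to time $i$, thereby obtaining the adaptively chosen prefix $\vec{x}_i$, and then outputs the predicate $h_i(r):=\mathbbm{1}[\text{a copy of }\mathcal{A}\text{ with randomness }r\text{ run on }\vec{x}_i\text{ is accurate at time }i]$. By Lemma~\ref{lem:privacy} and post-processing this mechanism is $(\eps,\delta)$-DP, and since $k\geq\frac{1}{\eps^2}\log(2\eps/\delta)$ by the choice of $k$, Theorem~\ref{thm:DPgeneralization} applies: except with probability $O(\delta/\eps)$, the empirical average $\frac{1}{k}\sum_{\ell}h_i(r^{j-1}_\ell)$ is within $10\eps$ of $\E_{r}[h_i(r)]$. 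The oblivious-accuracy hypothesis on $\mathcal{A}$ forces $\E_{r}[h_i(r)]\geq 9/10$ for every realized $\vec{x}_i$, so with probability $\geq1-O(\delta/\eps)$ at least a $(9/10-10\eps)=4/5$ fraction of the values $y_{i,1},\dots,y_{i,k}$ lie in $(1\pm\alpha)g(\vec{x}_i)$. A counting argument then shows any $\Gamma$-approximate median of these $k$ values lies in that interval provided $\Gamma<3k/10$ (a candidate output below, resp.\ above, the interval would have fewer than $k/5$ values on the corresponding side). By Theorem~\ref{thm:Pmed} the call \texttt{PrivateMed} returns such an approximate median except with probability $\beta'$, with $\Gamma=O\left(\frac{1}{\eps_0}\log(|W|/\beta')\right)=O\!\left(\sqrt{\eta k s\log(m/\beta)}\cdot\log(m/(\alpha\beta'))\right)$; the stated choice $k=\Omega(\eta s\log(m/\beta)\log^2(m/(\beta\alpha)))$ is precisely what makes $\Gamma<3k/10$ while also meeting the generalization requirement. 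Hence, except with probability $O(\delta/\eps)+\beta'$, $z_i\in(1\pm\alpha)g(\vec{x}_i)$.

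Finally I would take a union bound over all (at most $m$) query times, which with $\eps=1/100$, $\delta=O(\beta/m)$ and $\beta'=O(\beta/m)$ makes the total failure probability $O(\beta)$; rescaling the hidden constants yields the claimed $1-\beta$. I expect the main obstacle to be the third step: making the generalization argument legitimate even though $h_i$ depends on the adaptively (and DP-) chosen $\vec{x}_i$. One must be careful that the object handed to Theorem~\ref{thm:DPgeneralization} is the \emph{composed} map ``database $\to$ simulated transcript $\to$ predicate'', that this composition is $(\eps,\delta)$-DP by post-processing (with the adversary inside the post-processing), and that the empirical average in the generalization bound is over \emph{the very same} $k$ strings that drive the active copies, so that Lemma~\ref{lem:stateRecovery} lets us identify $h_i(r^{j-1}_\ell)$ with the event ``$y_{i,\ell}$ is accurate''. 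Keeping the bookkeeping of which strings are ``exposed'' in which outer loop consistent with the indexing of Lemmas~\ref{lem:stateRecovery} and~\ref{lem:privacy} is the other place requiring care.
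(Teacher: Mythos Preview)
Your proposal is correct and follows essentially the same approach as the paper's proof: invoke Lemma~\ref{lem:stateRecovery} to identify $y_{i,\ell}$ with $\AAA(r^{j-1}_\ell,\vec{x}_i)$, use Lemma~\ref{lem:privacy} together with post-processing (absorbing the adversary) to view the accuracy predicate as the output of an $(\eps,\delta)$-DP mechanism on the database $\vec{r^{j-1}}$, apply Theorem~\ref{thm:DPgeneralization} to get the $4/5$ majority of accurate $y_{i,\ell}$'s, and then argue that \texttt{PrivateMed} lands inside $(1\pm\alpha)g(\vec{x}_i)$, finishing with a union bound over the at most $m$ query times. Your explicit ``freshness'' paragraph (that $\vec{r^{j-1}}$ has not leaked to the adversary before outer loop $j$ begins, since the next/shadow copies are never queried and the advice bits go only to the algorithm) makes precise something the paper leaves implicit, but it does not change the route.
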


\begin{proof} Let $t_1< \dots < t_{\tau}$ be the times when the outer loop begins (Step~\ref{algstep:outerLoopInit}). Fix $j\in [\tau]$, and focus on time segment $t\in [t_{j-1},t_{j})$ during which the $j$'th outer loop iteration is executed. 
Let $\vec{x}_t=(x_1,\dots,x_t)$ denote the first $t$ updates in the stream.
By Lemma \ref{lem:stateRecovery} we have that on time $t_{j-1}$ all of the $k$ instances of $\mathcal{A}^{\rm active}$ are updated w.r.t.~the input stream $\vec{x}_{t_{j-1}}$ each corresponding to the random string $r^{j-1}_i$ for $i\in [k]$.
We now argue that these instances remain robust throughout the time segment $t\in [t_{j-1},t_{j})$ in which they receive the input stream updates of this time segment and the output of \texttt{RobustAdvice} is a function of these instances estimations.
Let $\AAA(r,\vec{x}_t)$ denote the output of the oblivious algorithm $\AAA$ when it is instantiated with the random string $r$ and queried after seeing the stream $\vec{x}_t$. Consider the following function:
$$
f_{\vec{x}_t}(r)=\1\left\{  \AAA(r,\vec{x}_t)\in\left(1\pm\alpha\right) \cdot g(\vec{x}_t) \right\}.
$$
By Lemma~\ref{lem:privacy} algorithm \texttt{RobustAdvice} is $\left(\eps{=}\frac{1}{100},\delta=\frac{\eps \beta}{2m}\right)$ - differentially private w.r.t.\ the collection of strings $\vec{r^{j-1}}$. Furthermore, the updates in the stream $\vec{x}_t$ are chosen (by the adversary) by post-processing the estimates returned by \texttt{RobustAdvice}, and the function $f_{\vec{x}_t}(\cdot)$ is defined by $\vec{x}_t$. As differential privacy is closed under post-processing, we can view the {\em function} $f_{\vec{x}_t}(\cdot)$ as the outcome of a differentially private computation on the collection of strings $\vec{r^{j-1}}$.
Therefore, by the generalization properties of differential privacy (Theorem~\ref{thm:DPgeneralization}), assuming that $k\geq\frac{1}{\eps^2}\log(\frac{2\eps }{\delta})$, with probability at least $(1-\frac{\delta}{\eps})$, for every $t\in [t_{j-1}, t_j)$ it holds that
$$
\left| \E_{r}[f_{\vec{x}_t}(r)] - \frac{1}{k}\sum_{i=1}^k f_{\vec{x}_t}(r^{j-1}_i) \right|\leq 10\eps=\frac{1}{10}.
$$
This holds for any $j\in [\tau]$, and so holds for all $t\in [m]$ w.p. at least $1-\beta/2$ by selecting $\frac{\delta}{\eps} = \frac{\beta}{2m}$ (for the corresponding random strings and functions $f_{\vec{x}_t}(r)$).
We continue  the analysis assuming that this is the case.
Now observe that $\E_{r}[f_{\vec{x}_t}(r)]\geq 9/10$ by the utility guarantees of $\AAA$ (because when the stream is fixed its answers are accurate to within multiplicative error of $(1\pm\alpha)$ with probability at least $9/10$). Thus, for at least $(\frac{9}{10}-10\eps)k=4k/5$ of the executions of $\AAA$ we have that $f_{\vec{x}_t}(r_i)=1$, which means that $y_{t,i}\in(1\pm\alpha)\cdot g(\vec{x}_t)$. That is, in every time step $t\in[m]$ we have that at least $4k/5$ of the $y_{t,i}$'s satisfy $y_{t,i}\in(1\pm\alpha)\cdot g(\vec{x}_t)$.
Recall that when the algorithm outputs an estimate, it is computed using algorithm \texttt{PrivateMed}, which is executed on the database $(y_{t,1},\dots,y_{t,k})$. 
By Theorem~\ref{thm:Pmed}, assuming that\footnote{We assume that the estimates that $\AAA$ returns are in the range $[-n^c,-1/n^c]\cup\{0\}\cup[1/n^c,n^c]$ (polynomially bounded in $n$) for some constant $c>0$. In addition, before running \texttt{PrivateMed} we may round each $y_{t,i}$ to its nearest power of $(1+\alpha)$, which has only a small effect on the error. There are at most $O(\frac{1}{\alpha}\log n)$ possible powers of $(1\pm\alpha)$ in that range, and hence, \texttt{PrivateMed} guarantees error at most $\Gamma=O(\frac{1}{\eps_0}\log\left(\frac{\lambda}{\alpha\delta}\log n\right))$ with probability at least $1-\delta/\lambda$. See Theorem~\ref{thm:Pmed}.
Recall also our assumption that $\log m=\Theta(\log n)$. }
$$k
=\Omega\left(\frac{1}{\eps_0}\log \left(\frac{m}{\beta}\frac{\log m}{\alpha}\right)\right)
=\Omega\left(\sqrt{\eta k s\cdot\log\left(\frac{m}{\beta}\right)}\cdot\log \left(\frac{m}{\beta\alpha}\right)\right),
$$
then with probability at least $1-\frac{\beta}{2m}$ algorithm \texttt{PrivateMed} returns an approximate median $\tilde{g}$ for the estimates $y_{t,1},\dots,y_{t,k}$, satisfying
$$
\left|\left\{ i : y_{t,i}\geq\tilde{g} \right\}\right|\geq\frac{4k}{10} 
\qquad\text{ and }\qquad 
\left|\left\{ i : y_{t,i}\leq\tilde{g} \right\}\right|\geq\frac{4k}{10}.
$$
Since $4k/5$ of the $y_{i,j}$'s satisfy $y_{t,i}\in(1\pm\alpha)\cdot g(\vec{x}_t)$, 
such an approximate median $z_t$ must also be in the range $(1\pm\alpha)\cdot g(\vec{x}_t)$. 
This holds separately for every estimate computed in time $t\in [m]$ (approximated median $z_t$) with probability at least $1-\frac{\beta}{2m}$, thus holds simultaneously for all the estimates computed throughout the execution with probability at least $1-\beta/2$.
Overall, we have robustness for all $t\in [m]$ w.p.~at least $1-\beta/2$ and \texttt{PrivateMed} for all $t\in [m]$ executed within its error guarantee w.p.~at least $1-\beta/2$, and so we have that w.p.~at least $1-\beta$ for all $t\in [m]$ \texttt{RobustAdvice} output $z_t$ admits: $$z_t \in \left(1\pm\alpha\right)\cdot g(\vec{x}_t).$$
\end{proof}

The following Theorem now follows from Theorem~\ref{thm:ASA-bigTheorem}.

\begin{theorem}\label{thm:cleanASA}
Fix any real valued function $g$ and fix $\alpha,\beta>0$ and $\eta\in\N$. Let $\AAA$ be an oblivious linear streaming algorithm for $g$ that uses space $s$ and guarantees accuracy $\alpha$ with failure probability $1/10$. Then there exists an adversarially robust streaming algorithm for $g$ in the ASA model with query/advice rate $\eta$, accuracy $\alpha$, and failure probability $\beta$ using space $O(\eta s^2 \log(m/\beta)\log^2(m/(\beta\alpha)) )$.
\end{theorem}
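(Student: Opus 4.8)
The plan is to instantiate algorithm \texttt{RobustAdvice} with the given oblivious linear algorithm $\AAA$ and with the parameters $\beta,m,\eta$, and then verify three things: that \texttt{RobustAdvice} is a legal algorithm in the ASA model with query/advice rate $\eta$; that it attains accuracy $\alpha$ with failure probability $\beta$; and that its space usage matches the claimed bound.

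For legality in the ASA model, I would observe that \texttt{RobustAdvice} specifies a predicate and requests an advice bit only inside the branch guarded by the condition $\left(\sum_{j=1}^i q_j\right)\bmod\eta=0$, so it asks for exactly one bit of advice once every $\eta$ queries, as the ASA game permits. I would also note that each requested bit equals $p_i(x_1,\dots,x_i)$ for a predicate $p_i:X^*\rightarrow\{0,1\}$ that the algorithm fully specifies (it reads off the next designated bit of the $\mathcal{S}_v$-state that the shadow copies would reach after processing the first $t$ updates), so the whole interaction is a valid instance of the ASA game; the number of queries is at most $T\leq m$ as required.

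For correctness, this is precisely the content of Theorem~\ref{thm:ASA-bigTheorem}: since $\AAA$ is oblivious, linear-state, and (by hypothesis) $\alpha$-accurate on all prefixes with probability at least $9/10$ in the oblivious setting, Theorem~\ref{thm:ASA-bigTheorem} guarantees that against an adaptive ASA adversary we have, with probability at least $1-\beta$, that $z_t\in(1\pm\alpha)\cdot g(\vec{x}_t)$ for every $t\in[m]$. The only point to double-check is that the inequalities invoked inside the proof of Theorem~\ref{thm:ASA-bigTheorem} are actually met by the constants fixed in \texttt{RobustAdvice}: that $k\geq\frac{1}{\eps^2}\log(2\eps/\delta)$ with $\eps=1/100$ and $\delta=\Theta(\beta/m)$ (which holds since the right-hand side is $O(\log(m/\beta))$, dominated by $k$), and that $k=\Omega\!\left(\sqrt{\eta k s\log(m/\beta)}\cdot\log(m/(\beta\alpha))\right)$; squaring this and cancelling one factor of $k$ turns it into $k=\Omega\!\left(\eta s\log(m/\beta)\log^2(m/(\beta\alpha))\right)$, which is exactly the value of $k$ set in the algorithm, so the constraint is self-consistent.

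For the space bound, the dominant term is the $3k$ stored copies of $\AAA$ (the $k$ active, $k$ next, and $k$ shadow instances), each using space $s$; the remaining bookkeeping (a constant number of counters, the at most $O(\frac{1}{\alpha}\log n)$-size rounded range used by \texttt{PrivateMed}, and \texttt{PrivateMed}'s own working space) is lower order. Hence the total space is $O(ks)=O\!\left(\eta s^2\log(m/\beta)\log^2(m/(\beta\alpha))\right)$. I expect no real obstacle here: since Theorem~\ref{thm:ASA-bigTheorem} does the heavy lifting, the proof is essentially a matter of checking that the parameter choices baked into \texttt{RobustAdvice} satisfy the circular-looking constraint on $k$ and that the number $\tau\leq m$ of outer loops makes the union bounds over $t\in[m]$ and over outer loops go through, and then reading off $O(ks)$ for the memory footprint.
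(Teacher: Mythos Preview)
Your proposal is correct and follows the paper's approach: the paper's own proof is the single line ``The following Theorem now follows from Theorem~\ref{thm:ASA-bigTheorem},'' and you have simply spelled out the details implicit in that sentence (legality in the ASA model, the self-consistency of the constraint on $k$, and the $O(ks)$ space count). There is no substantive difference.
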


\subsection{A Negative Result for the ASA Model}

In this section we show that $\ell_0$-sampling, a classical streaming problem, cannot be solved with sublinear space in the adversarial setting with advice.
Consider a turnstile stream 
$\vec{u} = (u_1, \dots , u_m)$ where each $u_i = (a_i, \Delta_i) \in [n]\times\{\pm1\}$. A $\beta$-error $\ell_0$-sampler returns with probability at least $1-\beta$ a uniformly random element from
$${\rm support}(u_1, \dots , u_m) = \left\{a\in[n] : \sum_{i:a_i=a}\Delta_i \neq 0\right\},$$
provided that this support is not empty.
The next theorem, due to Jowhari et al.~\cite{jowhari2011tight}, shows that $\ell_0$ sampling is easy in the oblivious setting.

\begin{theorem}[\cite{jowhari2011tight}]
There is a streaming algorithm with storage
$O\left(\log^2(n) \log(\frac{1}{\beta})\right)$ bits, that with probability at most $\beta$ reports FAIL, with probability at most $1/n^2$ reports an arbitrary answer, and in all other cases produces a uniform sample from ${\rm support}(\vec{u})$.
\end{theorem}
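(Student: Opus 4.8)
$\ell_0$-sampling theorem.}
The plan is to use the standard \emph{geometric subsampling plus sparse recovery} paradigm, together with the trick that reduces uniform sampling to finding a minimizer of a random ranking. Fix a hash $\pi:[n]\to[N]$ with $N=\poly(n)$, drawn from a family with enough independence; with probability at least $1-1/n^{3}$ it is injective on $\operatorname{support}(\vec u)$, and conditioned on that the coordinate $a^\star=\argmin_{a\in\operatorname{support}(\vec u)}\pi(a)$ is distributed uniformly over the support. It therefore suffices to build a low-space sketch that, with probability at least $1-\beta$, recovers $a^\star$, and that essentially never outputs a coordinate outside the support.

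First I would set up an $s$-sparse recovery primitive with $s=\Theta(\log(1/\beta))$: over a turnstile stream defining $x\in\Z^{n}$, maintain a Reed--Solomon/Vandermonde syndrome sketch of $2s$ elements of $\mathbb F_p$ with $p=\poly(n)$, which uses $O(s\log n)$ bits and, whenever $\|x\|_0\le s$, allows exact decoding of $x$ by syndrome decoding; alongside it keep one random fingerprint $\sum_a x_a\,r^{a}\bmod p$ for uniform $r\in\mathbb F_p$, so that any purported sparse decoding $\hat x$ can be \emph{verified} against the true $x$ with failure probability at most $\poly(n)/p$. Second, I would set up $L=\lceil\log n\rceil$ levels, where level $j$ keeps such a sparse-recovery-plus-fingerprint sketch fed only by the updates $(a_i,\Delta_i)$ with $\pi(a_i)\le N/2^{\,j}$ --- a random $2^{-j}$-fraction of coordinates that always contains the lowest-ranked ones. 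The total space is $L$ copies of $O(s\log n)$ bits, i.e.\ $O(\log^{2}n\cdot\log(1/\beta))$ bits. Third, at query time I would decode $\hat x^{(j)}$ at every level, check its fingerprint, and among levels whose sketch passes the check and has $1\le\|\hat x^{(j)}\|_0\le s$ pick one (say the smallest such $j$) and output the coordinate of $\hat x^{(j)}$ minimizing $\pi$; if no level qualifies, report FAIL.

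For correctness: an arbitrary answer can only arise when some level's true restricted vector has more than $s$ nonzeros yet the decoder returns a sparse $\hat x^{(j)}$ whose fingerprint coincidentally matches, an event of probability at most $\poly(n)/p$ per level; a union bound over the $L=O(\log n)$ levels pushes this below $1/n^{2}$ for a large enough polynomial $p$. A FAIL requires that \emph{no} level sees between $1$ and $s$ surviving support coordinates, but at the level $j^\star$ with $2^{-j^\star}\cdot|\operatorname{support}(\vec u)|=\Theta(\log(1/\beta))$, concentration (Chernoff under $\Theta(\log(1/\beta))$-wise independence of $\pi$, or a second-moment argument under pairwise independence) keeps the number of survivors in $[1,s]$ except with probability at most $\beta$; on that event the lowest-$\pi$ survivor is exactly $a^\star$, which is recovered, and $a^\star$ is uniform over the support. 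The step I expect to be the main obstacle is precisely this uniformity-and-FAIL analysis under limited independence: one must simultaneously guarantee that the chosen level's decoded set is \emph{exactly} the true restricted support (not a spurious sparse vector that passed the fingerprint) and that selecting the $\argmin$ of a random ranking yields an exactly uniform sample, which dictates the independence parameters of $\pi$ and the constant in $s$; the sparse-recovery machinery and the space bookkeeping are routine once the $\poly(n)$-size field primitive is in place.
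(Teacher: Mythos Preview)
The paper does not prove this theorem at all: it is quoted verbatim as a black-box result from \cite{jowhari2011tight} and used only to assert that $\ell_0$-sampling is easy in the oblivious setting. There is therefore nothing in the paper to compare your proposal against.

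That said, your sketch is a faithful reconstruction of the Jowhari--Sa\u{g}lam--Tardos construction: the min-hash ranking for uniformity, geometric subsampling into $O(\log n)$ levels, an $s$-sparse exact recovery sketch with $s=\Theta(\log(1/\beta))$ at each level, and a polynomial fingerprint to certify the decoded vector. One point worth tightening is your remark that the choice of level might bias the output. It does not: since level $j$ retains exactly the coordinates with $\pi(a)\le N/2^{j}$, the global minimizer $a^\star$ lies in \emph{every} non-empty level, so any level that decodes correctly yields the same $\argmin$. Hence the ``pick the smallest qualifying $j$'' rule is harmless, and the only randomness affecting the identity of the output is $\pi$ itself. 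The independence bookkeeping you flag as the main obstacle is exactly where the work is in the original paper; your Chernoff-under-bounded-independence plan is the right one, with the hash family taken $\Theta(\log(1/\beta))$-wise independent (or using the min-wise independence analysis of JST directly).
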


Nevertheless, as we next show, this is a hard problem in the ASA setting. In fact, our negative result even holds for a simpler variant of the $\ell_0$ sampling problem, in which the algorithm is allowed to return an {\em arbitrary} element, rather than a random element. Formally,

\begin{definition}
Let $X$ be a finite domain and 
let $\AAA$ be an algorithm that operates on a stream of updates $(u_1,\dots,u_m)\in (X\times\{\pm1\})$, given to $\AAA$ one by one. Algorithm $\AAA$ solves the $J_0$ problem with failure probability $\beta$ if, except with probability at most $\beta$, whenever $\AAA$ is queried it outputs an element with non-zero frequency w.r.t.\ the current stream. That is, if $\AAA$ is queried in time $i$ then it should output an element from ${\rm support}(u_1,
\dots,u_i)$.
\end{definition}

\begin{theorem}\label{thm:streamingNegative}
Let $X$ be a finite domain and let $T$ be such that $|X|=\Omega(T)$ (large enough). 
Let $\AAA$ be an algorithm for solving the $J_0$ streaming problem over $X$ in the adversarial setting with advice with $T$ queries and with failure probability at most $3/4$. Then $\AAA$ uses space $\Omega(T)$. 
Furthermore, this holds also when $\eta=1$, that is, even if algorithm $\AAA$ gets an advice after {\em every} query.
\end{theorem}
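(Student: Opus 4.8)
\textit{Proof proposal.} The plan is a standard counting/encoding argument. Fix a set $S=\{a^{(1)},\dots,a^{(T)}\}\subseteq X$ of size $T$ and consider the following adversary $\mathrm{Adv}_S$ on a stream of length $m=2T-1$. In rounds $1,\dots,T$ it feeds the insertions $(a^{(1)},+1),\dots,(a^{(T)},+1)$, issuing a query demand only in round $T$; let $z_0$ be the response and $b_0$ the advice bit (with $\eta=1$ an advice bit is returned after \emph{every} query). For $j=1,\dots,T-1$, in round $T+j$ the adversary feeds the deletion $(z_{j-1},-1)$, issues a query, and records the response $z_j$ and advice bit $b_j$. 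This is a legal ASA interaction with exactly $T$ queries. If $\AAA$ never errs on this run, then at the query of round $T+j$ the support is $S\setminus\{z_0,\dots,z_{j-1}\}$, which is nonempty for $j\le T-1$; hence $z_0,\dots,z_{T-1}$ are $T$ distinct elements of $S$, so $\{z_0,\dots,z_{T-1}\}=S$.

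Next fix the randomness. Since the failure probability is at most $3/4$ for every adversary, for each $S$ we have $\Pr_\rho[\AAA\text{ answers all }T\text{ queries of }\mathrm{Adv}_S]\ge 1/4$; averaging over a uniformly random size-$T$ set $S$ and applying Fubini, there is a coin string $\rho^\ast$ such that the set $\GGG$ of $S$ on which the now-deterministic $\AAA$ succeeds has $|\GGG|\ge \tfrac14\binom{|X|}{T}$. For $S\in\GGG$ let $\sigma_0(S)\in\{0,1\}^{\le s}$ be $\AAA$'s memory state at the end of round $T$ (before any query) and let $\vec b(S)=(b_0,\dots,b_{T-1})\in\{0,1\}^T$ be the advice bits. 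I claim the map $S\mapsto(\sigma_0(S),\vec b(S))$ is injective on $\GGG$: a decoder that knows $\AAA$'s code and $\rho^\ast$ can start from $\sigma_0(S)$ and re-simulate rounds $T,\dots,2T-1$, reading off the algorithm's output at each query, feeding the next bit of $\vec b(S)$ at each advice step (it never needs to evaluate the predicates), and feeding the deletion of the output it just computed at each round $T+j$. This simulation is step-by-step identical to the real execution of $\AAA$ under $\mathrm{Adv}_S$, so the decoder recovers $z_0,\dots,z_{T-1}$, hence $S$. Injectivity gives $\tfrac14\binom{|X|}{T}\le 2^{s}\cdot 2^{T}$, and with $\binom{|X|}{T}\ge(|X|/T)^T$ together with $|X|\ge 8T$ (the ``large enough'' hypothesis) this yields $s+T\ge \log_2\binom{|X|}{T}-2\ge 3T-2$, i.e.\ $s\ge 2T-2=\Omega(T)$. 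The \textit{Furthermore} part is automatic since the whole argument was carried out with $\eta=1$.

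The point that needs care is the faithfulness of the decoder's simulation, which is what makes the advice bits ``pay their way'': because every Phase-2 update is simply the algorithm's own previous output (a deletion), and because $\sigma_0$ already summarizes the entire insertion phase, the decoder can reproduce the real run from $\sigma_0$, the fixed randomness $\rho^\ast$, and the $T$ advice bits alone, without knowing $S$ in advance and without recomputing any predicate. The only other bookkeeping subtleties are the off-by-one at the first query (placing it in the same round, $T$, as the last insertion so that no spurious update is needed in Phase 2) and the averaging step that de-randomizes $\AAA$ before the counting. Everything else — the legality check $\sum q_j\le T$, the nonemptiness of the support at each query, and the binomial estimate — is routine.
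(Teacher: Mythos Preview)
Your proof is correct and follows essentially the same counting/encoding argument as the paper: insert a size-$T$ set, then repeatedly query-and-delete, and observe that the state after the insertion phase together with $T$ advice bits pins down the set. The one presentational difference is that the paper feeds \emph{random} advice bits in its thought experiment (so success probability drops to $2^{-T}/4$) and then fixes both the coins and the advice string before counting states, whereas you keep the \emph{correct} advice, fix only the coins, and absorb the advice into the encoding $(\sigma_0(S),\vec b(S))$; either way the same factor $2^{T}$ appears and the final inequality $2^{s}\cdot 2^{T}\ge \tfrac14\binom{|X|}{T}$ is identical.
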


\begin{proof}
Let $\AAA$ be an algorithm for $J_0$ sampling with $T$ queries over domain $X$ in the ASA setting with failure probability at most $3/4$. Consider the following thought experiment.\\

\noindent\fbox{%
    \parbox{\linewidth}{%
        \;\\
        {\bf Input:} $Y\subseteq X$ of size $|Y|=T$ 
        \begin{enumerate}
            \item For every $x\in Y$, feed algorithm $\AAA$ the update $(x,1)$.
            \item Initiate $\hat{Y}=\emptyset$.
            \item Repeat $T$ times: 
            \begin{enumerate}
                \item Query $\AAA$ and obtain an outcome $x\in X$
                \item If $\AAA$ requests an advice then give it a random bit $b$.
                \item Add $x$ to $\hat{Y}$
                \item Feed the update $(x,-1)$ to $\AAA$
            \end{enumerate}
            \item Output $\hat{Y}$.
        \end{enumerate}
    }%
}
\; \medskip

We say that the thought experiment {\em succeeds} if $\hat{Y}=Y$. By the assumption on $\AAA$, for every input $Y$, our thought experiment succeeds with probability at least $2^{-T}/4$. This is because if all of the bits of advice are correct then $\AAA$ succeeds with probability at least $1/4$, and the advice bits are all correct with probability at least $2^{-T}$. Hence, there must exist a fixture of $\AAA$'s coins and a fixture of an advice string $\vec{b}$ for which our thought experiments succeeds on at least $2^{-T}/4$ fraction of the possible inputs $Y$.\footnote{Otherwise, consider sampling an input $Y$ uniformly. We have that $\frac{2^{-T}}{4}\leq\Pr_{r,\vec{b},Y}[\AAA_{r,\vec{b}}(Y) \text{ succeeds}]=\sum_{r,\vec{b}}\Pr[r,\vec{b}]\cdot\Pr_{Y}[\AAA_{r,\vec{b}}(Y) \text{ succeeds}]<\sum_{r,\vec{b}}\Pr[r,\vec{b}]\cdot\frac{2^{-T}}{4}=\frac{2^{-T}}{4}$, which is a contradiction. Here $r$ denotes the randomness of $\AAA$ and $\vec{b}$ is the advice string.}

That is, after fixing $\AAA$'s coins and the advice string $\vec{b}$ as above, there is a subset of inputs $\BbB$ of size $|\BbB|\geq\frac{2^{-T}}{4}{|X| \choose T}$ such that for every $Y\in \BbB$, when executed on $Y$, our thought experiment outputs $\hat{Y}=Y$. Finally, note that the inner state  of algorithm $\AAA$ at the end of Step~1 {\em determines} the outcome of our thought experiment. Hence, as there are at least $\frac{2^{-T}}{4}{|X| \choose T}$ different outputs, there must be at least $\frac{2^{-T}}{4}{|X| \choose T}$ possible different inner states for algorithm $\AAA$, meaning that its space complexity (in bits) is at least $\log\left( \frac{2^{-T}}{4}{|X| \choose T} \right),$ which is more than $T$ provided that $|X|=\Omega(T)$ (large enough).
\end{proof}

\section{Adversarial Streaming with Bounded Interruptions (ASBI)}

In this section we present our results for the ASBI model, defined in Section~\ref{sec:ASBIintro}. We begin with our generic transformation.

\subsection{A Generic Construction for the ASBI Model}

Our construction is specified in algorithm \texttt{RobustInterruptions}. The following theorem specifies its properties.

\begin{theorem}\label{thm:interruptions}
Fix any function $g$ and fix $\alpha,\beta>0$. Let $\AAA$ be an oblivious streaming algorithm for $g$ that uses space $s$ and guarantees accuracy $\alpha$ with failure probability $\beta$. Then there exists an adversarially robust streaming algorithm for $g$ that resists $R$ interruptions and guarantees accuracy $5\alpha$ with failure probability $O(R\beta)$ using space $O(R s)$.
\end{theorem}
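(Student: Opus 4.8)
The plan is to instantiate the \emph{sketch switching} template of Ben-Eliezer et al. The algorithm \texttt{RobustInterruptions} maintains $2(R+1)$ independent instances of $\AAA$, grouped into $R+1$ \emph{generations}; generation $\ell$ is a pair $(\BBB_\ell,\CCC_\ell)$ of instances run with fresh independent randomness. Every stream update is fed to \emph{every} instance (so all of them always hold the sketch of the current prefix), which accounts for the $O(Rs)$ space. At all times one generation is \emph{current}, starting with generation $1$. In each round the algorithm feeds the new update to all instances, queries the current pair to get estimates $b$ (from $\BBB_\ell$) and $c$ (from $\CCC_\ell$), and then: if $b\in(1\pm 3\alpha)\cdot c$ it outputs $b$ and keeps the current generation; otherwise it advances the current generation to $\ell+1$ and outputs $\BBB_{\ell+1}$'s estimate. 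The key structural point is that only the $\BBB$-instances ever produce outputs seen by the adversary; the $\CCC$-instances are used purely internally, to decide when to abandon a generation.

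Fix the adversary's strategy, and for each generation $\ell$ let $t_\ell$ be the round it becomes current and $t_{\ell+1}$ the round the algorithm abandons it. I would define a good event $G$ requiring, for every $\ell\in[R+1]$: (i) $\BBB_\ell$ is $(1\pm\alpha)$-accurate on every prefix of the length-$m$ stream $\vec{w}^{(\ell)}$ that the adversary has already committed to by the time generation $\ell$ becomes current (the stream in force, including its then-unseen suffix); and (ii) $\CCC_\ell$ is $(1\pm\alpha)$-accurate on $\vec{x}_i$ for every round $i\le t_{\ell+1}$. Since $\vec{w}^{(\ell)}$ is determined before generation $\ell$ is ever used, it is independent of generation $\ell$'s randomness, so (i) fails with probability at most $\beta$ by the oblivious guarantee of $\AAA$. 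Item (ii) is the subtle part: because the $\CCC$-instances are never revealed, $\CCC_\ell$'s randomness affects the execution \emph{only} through the abandonment round $t_{\ell+1}$; comparing with the \emph{phantom} execution that never abandons generation $\ell$, one sees that the phantom stream $\vec{x}^{P}$ is a fixed sequence independent of $\CCC_\ell$'s randomness, the real execution agrees with it on rounds $1,\dots,t_{\ell+1}$, and every realized prefix $\vec{x}_i$ with $i\le t_{\ell+1}$ is a prefix of $\vec{x}^{P}$. Hence ``$\CCC_\ell$ accurate on all prefixes of $\vec{x}^{P}$'' (probability $\ge 1-\beta$) implies (ii), and a union bound over the $2(R+1)$ instances gives $\Pr[G]\ge 1-O(R\beta)$.

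Conditioned on $G$, two things hold. First, every output is $5\alpha$-accurate: if generation $\ell$ outputs at a round $i>t_\ell$ it did so because $b\in(1\pm 3\alpha)c$ while (ii) gives $c\in(1\pm\alpha)g(\vec{x}_i)$, so $b\in(1\pm 5\alpha)g(\vec{x}_i)$; at round $i=t_\ell$ the output is $\BBB_\ell$'s estimate on a prefix of $\vec{w}^{(\ell)}$, which is $(1\pm\alpha)$-accurate by (i). Second, at most $R+1$ generations are ever used, so the algorithm never runs out: whenever the algorithm abandons generation $\ell$ at round $t_{\ell+1}$ it is because $b\notin(1\pm 3\alpha)c$; by (ii), $c$ is $(1\pm\alpha)$-accurate there, and since any two $(1\pm\alpha)$-accurate estimates are automatically $3\alpha$-close, $\BBB_\ell$ must be \emph{inaccurate} at round $t_{\ell+1}$ --- but if no interruption had occurred while generation $\ell$ was current, then $\vec{x}_{t_{\ell+1}}$ would be a prefix of $\vec{w}^{(\ell)}$, on which (i) says $\BBB_\ell$ \emph{is} accurate, a contradiction. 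So each abandonment is ``charged'' to an interruption occurring in its interval $[t_\ell,t_{\ell+1})$, these intervals are disjoint, and there are at most $R$ interruptions, giving at most $R+1$ generations. Altogether $\Pr[\exists\, i:\ z_i\notin(1\pm 5\alpha)g(\vec{x}_i)]\le\Pr[\overline{G}]=O(R\beta)$, with total space $O(Rs)$.

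The main obstacle is precisely the circularity in item (ii): the round at which a generation is abandoned depends on that generation's internal instance $\CCC_\ell$, so one cannot directly say ``the stream is oblivious with respect to $\CCC_\ell$''. The fix --- comparing to the phantom run that never abandons the generation, and using that the real stream is always a prefix of the phantom one up to the abandonment while the phantom stream is independent of $\CCC_\ell$ --- is the technical heart. Beyond that, only routine care is needed: the off-by-one bookkeeping around interruptions (an interruption issued after round $j$ only alters updates from round $j+1$ onward, so it does not affect the detection test performed at round $j+1$ on a still-committed prefix) and fixing the exact constant in the ``$3\alpha$-close'' test so that the accuracy claim and the ``two accurate estimates are close'' claim both go through (which needs, say, $\alpha\le 1/3$).
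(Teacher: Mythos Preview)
Your proposal is correct and takes essentially the same approach as the paper: maintain $\Theta(R)$ paired copies, compare the active pair and switch on disagreement, and use a phantom-execution argument (what the paper calls $\texttt{RI}_{r^*}$) to show that the check instance sees an oblivious stream up to the switch point. The only differences are cosmetic---on a switch the paper outputs the \emph{old} generation's check value rather than the \emph{new} generation's answer value (which is why you need your event (i) at round $t_\ell$), and the paper proves non-failure via an induction showing $j_r\le p_r$ rather than your direct charging of abandonments to interruptions---but these variants are equivalent.
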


\begin{algorithm*}[t!]

\caption{\bf \texttt{RobustInterruptions}}
\begin{flushleft}

{\bf Input:} Parameter $R$ bounding the number of possible interruptions.

{\bf Algorithm used:} An oblivious streaming algorithm $\AAA$ with space $s$, accuracy $\alpha$, and confidence $\beta$.
\end{flushleft}

\begin{enumerate}[leftmargin=15pt,rightmargin=10pt,itemsep=1pt,topsep=3pt]

\item 
Initialize $2R$ independent instances of algorithm $\AAA$, denoted as $\AAA^{\rm answer}_1,\dots,\AAA^{\rm answer}_R$ and $\AAA^{\rm check}_1,\dots,\AAA^{\rm check}_R$. Set $r=1$.

\item For $i=1,2,\dots,m$:

\begin{enumerate}
	\item Obtain the next item in the stream $x_i\in X$.
	
	\item Feed $x_i$ to {\em all} of the copies of algorithm $\AAA$.
	
	\item Let $z_{r,i}^{\rm answer}$ and $z_{r,i}^{\rm check}$ denote the answers returned by $\AAA^{\rm answer}_r$ and $\AAA^{\rm check}_r$, respectively.
	
	\item\label{step:check} If $z_{r,i}^{\rm answer}\in(1\pm2\alpha)\cdot z_{r,i}^{\rm check}$ then output $z_{r,i}^{\rm answer}$. Otherwise, output $z_{r,i}^{\rm check}$ and set $r\leftarrow r+1$.
	
	\item If $r>R$ then {\rm FAIL}. Otherwise continue to the next iteration.

\end{enumerate}

\end{enumerate}
\end{algorithm*}

Fix an adversary $\BBB$ and consider the interaction between algorithm \texttt{RobustInterruptions} and the adversary $\BBB$. 
For $r\in[R]$, let $i_r$ denote the time step in which $z_{r,{i_r}}^{\rm check}$ is returned.

\begin{lemma}\label{lem:check}
Fix $r^*\in[R]$. 
With probability at least $1-\beta$, the answers returned by $\AAA^{\rm check}_{r^*}$ in times $1,2,\dots,i_{r^*}$ are $\alpha$-accurate. That is, for every $1\leq i\leq i_{r^*}$ it holds that $z_{r^*,i}^{\rm check}\in(1\pm\alpha)\cdot g(x_1,\dots,x_i)$. 
\end{lemma}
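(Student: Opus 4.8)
The plan is to reduce the claim to the oblivious-accuracy guarantee of $\AAA$ (Definition~\ref{def:obliviousFull}) by showing that the stream prefix $x_1,\dots,x_{i_{r^*}}$ that $\AAA^{\rm check}_{r^*}$ processes before it is ``retired'' behaves, from the point of view of $\AAA^{\rm check}_{r^*}$, like a \emph{fixed} (oblivious) stream --- that is, it is independent of the internal randomness of $\AAA^{\rm check}_{r^*}$. Once this is established, the oblivious guarantee says that, with probability at least $1-\beta$ over the coins of $\AAA^{\rm check}_{r^*}$, \emph{all} of its outputs on that fixed stream are $\alpha$-accurate, and in particular its outputs at times $1,\dots,i_{r^*}$ are, which is exactly the lemma.

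To make this precise I would condition on all of the randomness in the system except that of $\AAA^{\rm check}_{r^*}$; call this conditioning $\sigma$ (it fixes the coins of $\AAA^{\rm answer}_1,\dots,\AAA^{\rm answer}_R$, of $\AAA^{\rm check}_r$ for $r\neq r^*$, and of the adversary $\BBB$). Under $\sigma$ I would define a \emph{shadow execution} that behaves exactly like \texttt{RobustInterruptions} except that, once the counter $r$ reaches $r^*$, it never increments past $r^*$: it always outputs $z^{\rm answer}_{r^*,i}$ and ignores $\AAA^{\rm check}_{r^*}$ altogether. This shadow execution never touches $\AAA^{\rm check}_{r^*}$, so the stream it produces, $\hat x(\sigma)=(\hat x_1(\sigma),\dots,\hat x_m(\sigma))$, is a deterministic function of $\sigma$. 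A short induction on the time step then shows that the true execution (with \emph{any} realization of $\AAA^{\rm check}_{r^*}$'s coins) coincides with the shadow execution at every step up to and including $i_{r^*}$: before time $i_{r^*}$ the value of $r$ is at most $r^*$ and is governed only by the earlier switch times $i_1,\dots,i_{r^*-1}$, which involve copies of index strictly less than $r^*$; and in every such step the response handed to the adversary is produced either by some $\AAA^{\rm answer}$ copy or by some $\AAA^{\rm check}_r$ with $r<r^*$ (at time $i_{r^*}$ itself the responses first diverge, but $x_1,\dots,x_{i_{r^*}}$ are already determined by $z_1,\dots,z_{i_{r^*}-1}$). Hence $x_1,\dots,x_{i_{r^*}}=\hat x_1(\sigma),\dots,\hat x_{i_{r^*}}(\sigma)$.

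With this in hand, $\AAA^{\rm check}_{r^*}$ is an oblivious streaming algorithm fed the fixed stream $\hat x(\sigma)$, so by Definition~\ref{def:obliviousFull}, with probability at least $1-\beta$ over its coins, $z^{\rm check}_{r^*,i}\in(1\pm\alpha)\cdot g(\hat x_1(\sigma),\dots,\hat x_i(\sigma))$ simultaneously for all $i\in[m]$, hence in particular for all $i\le i_{r^*}$; since $g(\hat x_1(\sigma),\dots,\hat x_i(\sigma))=g(x_1,\dots,x_i)$ for $i\le i_{r^*}$, this gives the stated conclusion conditioned on $\sigma$, and averaging over $\sigma$ finishes the proof. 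I expect the only real obstacle to be the one the shadow execution is designed to handle: the stopping time $i_{r^*}$ at which $\AAA^{\rm check}_{r^*}$ is retired is the first disagreement between $\AAA^{\rm answer}_{r^*}$ and $\AAA^{\rm check}_{r^*}$ and therefore depends on $\AAA^{\rm check}_{r^*}$'s own randomness, so one cannot naively ``fix the stream and apply the oblivious bound'' --- the prefix whose accuracy we need has a randomness-dependent length. Exhibiting a single $\sigma$-measurable stream that agrees with the true stream on the relevant prefix regardless of where the disagreement falls is what lets the oblivious guarantee be invoked cleanly.
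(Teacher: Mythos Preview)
Your proposal is correct and follows essentially the same approach as the paper: fix all randomness except that of $\AAA^{\rm check}_{r^*}$, define a modified execution that outputs $z^{\rm answer}_{r^*,i}$ once $r$ reaches $r^*$ (so that $\AAA^{\rm check}_{r^*}$ is never consulted and the resulting stream is deterministic), observe that the real execution agrees with this modified one through time $i_{r^*}$, and invoke the oblivious guarantee on the fixed stream. Your write-up is in fact more careful than the paper's about the subtlety that $i_{r^*}$ itself depends on the coins of $\AAA^{\rm check}_{r^*}$, which the shadow-execution device handles cleanly.
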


\begin{proof}
For simplicity, we assume that the adversary $\BBB$ is deterministic (this is without loss of generality by a simple averaging argument). Fix the randomness of all copies of algorithm $\AAA$, except for $\AAA^{\rm check}_{r^*}$. 
Let $\texttt{RI}_{r^*}$ be a variant of algorithm \texttt{RobustInterruptions} which is identical to \texttt{RobustInterruptions} until the time step $i^*$ in which $r$ becomes $r^*$. In times $i\geq i^*$, algorithm $\texttt{RI}_{r^*}$ simply outputs $z_{r^*,i}^{\rm answer}$, i.e., the answer given by $\AAA^{\rm answer}_{r^*}$. 
Note that $\mathcal{A}_{r^*}^{check}$ does not exist in algorithm \texttt{$RI_{r^*}$}.

As we fixed the coins of the copies of $\AAA\neq\AAA^{\rm check}_{r^*}$, the interaction between $\BBB$ and $\texttt{RI}_{r^*}$ is deterministic. In particular, it generates a single stream $\vec{x}_{r^*}$. 
By the utility guarantees of algorithm $\AAA^{\rm check}_{r^*}$, when run on this stream, then with probability at least $1-\beta$ it maintains $\alpha$-accuracy throughout the stream.

The lemma now follows by observing that until time $i_{r^*}$ the stream generated by the interaction between $\BBB$ and algorithm \texttt{RobustInterruptions} is identical to the stream $\vec{x}_{r^*}$.
\end{proof}

\begin{lemma}\label{lem:answer}
With probability at least $1-R\beta$, all of the answers given by \texttt{RobustInterruptions} (before returning FAIL) are $5\alpha$-accurate.
\end{lemma}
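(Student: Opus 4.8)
The plan is to combine \lemref{check}, invoked once for each $r^*\in[R]$, with the test performed in Step~\ref{step:check} of \texttt{RobustInterruptions}. First I would take a union bound: by \lemref{check}, for every fixed $r^*\in[R]$ the copy $\AAA^{\rm check}_{r^*}$ is $\alpha$-accurate on all time steps $1,\dots,i_{r^*}$ except with probability at most $\beta$. (If index $r^*$ is never ``used up'' by a check mismatch, the same argument through the auxiliary algorithm $\texttt{RI}_{r^*}$ used in the proof of \lemref{check} gives $\alpha$-accuracy on all of $1,\dots,m$, since the stream generated in that run coincides with the real stream for the whole execution; so the guarantee always covers exactly the rounds during which $r=r^*$.) Hence, with probability at least $1-R\beta$, \emph{simultaneously} for every $r^*\in[R]$ the copy $\AAA^{\rm check}_{r^*}$ is $\alpha$-accurate throughout the (possibly empty) window of rounds in which the counter equals $r^*$. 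I would condition on this event for the remainder of the argument.

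Next I would run the per-round case analysis. Fix a round $i$ at which \texttt{RobustInterruptions} produces an output (i.e.\ before it possibly returns FAIL), and let $r$ be the value of the counter during round $i$. Since the counter is incremented only at the \emph{end} of round $i_r$, round $i$ lies inside the active window of index $r$, so the conditioning yields $z_{r,i}^{\rm check}\in(1\pm\alpha)\cdot g(x_1,\dots,x_i)$. If the test in Step~\ref{step:check} fails, the algorithm outputs exactly $z_{r,i}^{\rm check}$, which is $\alpha$-accurate, and in particular $5\alpha$-accurate. If the test succeeds, the output is $z_{r,i}^{\rm answer}$ with $z_{r,i}^{\rm answer}\in(1\pm2\alpha)\cdot z_{r,i}^{\rm check}\subseteq(1\pm2\alpha)(1\pm\alpha)\cdot g(x_1,\dots,x_i)$; since $(1+2\alpha)(1+\alpha)\le 1+5\alpha$ and $(1-2\alpha)(1-\alpha)\ge 1-5\alpha$ for $\alpha\in(0,1)$, the output is $5\alpha$-accurate as well. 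Thus on the conditioned event every output produced before FAIL is $5\alpha$-accurate, which is the claim.

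I do not expect a genuine obstacle here; the points that need care are (i) the bookkeeping that ties each $\AAA^{\rm check}_r$ to exactly the interval of rounds it is responsible for, so that \lemref{check} is invoked on the right prefix, (ii) the minor extension of \lemref{check} to the last, never-incremented index, and (iii) the two elementary inequalities controlling the error blow-up. The union bound over the $R$ copies is legitimate even though the copies' randomness is correlated, since a union bound needs only the individual failure probabilities.
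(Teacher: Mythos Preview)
Your proposal is correct and follows exactly the route the paper takes: the paper's own proof is the one-line ``Follows from a union bound over \lemref{check}, and by Step~\ref{step:check} of \texttt{RobustInterruptions},'' and you have simply spelled out the details (the case split on whether the test in Step~\ref{step:check} passes, and the elementary bound $(1\pm2\alpha)(1\pm\alpha)\subseteq(1\pm5\alpha)$). Your points (i)--(iii) about bookkeeping are the right places to be careful, and your handling of the never-incremented final index is a legitimate minor extension that the paper leaves implicit.
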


\begin{proof}
Follows from a union bound over Lemma~\ref{lem:check}, and by Step~\ref{step:check} of  \texttt{RobustInterruptions}.
\end{proof}

\begin{lemma}\label{lem:fail}
Algorithm \texttt{RobustInterruptions} returns FAIL with probability at most $2R\beta$.
\end{lemma}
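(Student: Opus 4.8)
The plan is to show that if algorithm \texttt{RobustInterruptions} returns FAIL, then "many" of the rounds $r \in [R]$ must have been ones in which the check copy $\AAA^{\rm check}_r$ failed to be accurate at some point up to time $i_r$, and then bound the probability of this happening by a union bound over Lemma~\ref{lem:check}. More precisely, FAIL is returned only when $r$ has been incremented $R$ times, i.e., when all $R$ values $i_1 < i_2 < \dots < i_R$ have been reached; this means the counter $r$ was advanced $R$ times, which is strictly more than the $R$ interruptions the adversary is permitted — so at least one advance of the counter happened \emph{without} a preceding interruption since the last advance. I would like to argue that such a "spurious" counter advance implies that either $\AAA^{\rm answer}_r$ or $\AAA^{\rm check}_r$ was inaccurate at time $i_r$: indeed, a counter advance at time $i_r$ means $z_{r,i_r}^{\rm answer}\notin(1\pm2\alpha)\cdot z_{r,i_r}^{\rm check}$, and if the segment of the stream that $\AAA^{\rm answer}_r$ and $\AAA^{\rm check}_r$ have both seen since $r$ became the active index was never interrupted, then it coincides with a fixed oblivious stream, on which both copies are individually $\alpha$-accurate w.h.p.\ — and two $\alpha$-accurate estimates of the same quantity cannot differ by more than a $(1\pm 2\alpha)$-ish factor, a contradiction.

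The cleanest way to run this is a counting/charging argument. There are at most $R$ interruptions in the whole execution, and the counter is advanced $R$ times before FAIL; assign to each counter-advance at time $i_r$ the most recent interruption that occurred at a time $\le i_r$ (and strictly after $i_{r-1}$, the previous advance). If every advance could be charged to a distinct interruption in this way, we would need at least $R$ interruptions strictly interleaved with the $R$ advances plus the initial segment, which, together with the fact that reaching FAIL requires the $R$-th advance to have \emph{occurred} (one more event than interruptions can support in the relevant window), forces at least one advance $r^*$ to have no interruption in the window $(i_{r^*-1}, i_{r^*}]$. Fix this $r^*$. Then on the window $(i_{r^*-1}, i_{r^*}]$ (set $i_0=0$) the stream is not modified, so the stream seen by $\AAA^{\rm answer}_{r^*}$ and $\AAA^{\rm check}_{r^*}$ from their initialization up to time $i_{r^*}$ is exactly the fixed stream $\vec{x}_{r^*}$ defined in the proof of Lemma~\ref{lem:check} — because $\AAA^{\rm check}_{r^*}$ and $\AAA^{\rm answer}_{r^*}$ are both fed \emph{all} updates from time $1$, but their \emph{relevant} behavior is governed by the same argument as in Lemma~\ref{lem:check}. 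By Lemma~\ref{lem:check} (and the analogous statement for $\AAA^{\rm answer}_{r^*}$, which is just the oblivious utility guarantee on the same stream), with probability at least $1-2\beta$ both $z_{r^*,i_{r^*}}^{\rm answer}$ and $z_{r^*,i_{r^*}}^{\rm check}$ lie in $(1\pm\alpha)\cdot g(x_1,\dots,x_{i_{r^*}})$, whence $z_{r^*,i_{r^*}}^{\rm answer}\in(1\pm 2\alpha)\cdot z_{r^*,i_{r^*}}^{\rm check}$ (for small $\alpha$), so the counter would not have been advanced at $i_{r^*}$ — contradiction. Taking a union bound over the (at most $R$) choices of $r^*$ gives that FAIL occurs with probability at most $2R\beta$.

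The main obstacle I anticipate is making the charging argument airtight: one must argue carefully that reaching FAIL genuinely forces a counter-advance that is not "caused" by any interruption, rather than merely that the number of advances equals the number of interruptions. The subtlety is that an interruption occurring \emph{exactly} at or just before a check-failure could in principle "explain" each of the $R$ advances, so the pigeonhole must be set up against the right window. I would handle this by noting that $\texttt{RobustInterruptions}$ returns FAIL only \emph{after} the $R$-th advance, i.e., the execution contains $R$ advances and additionally survives past the $R$-th; combined with the restriction $\sum_j d_j \le R$, and the fact that at time $i_1$ (the first advance) no interruption was yet necessary to explain anything if $i_1$ occurs before the first interruption, one pins down an index $r^*$ whose window is interruption-free. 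An alternative, perhaps more robust framing, is to observe that if \emph{every} window $(i_{r-1},i_{r}]$ contained an interruption, then there are $\ge R$ interruptions, which is allowed — so instead I would strengthen the claim to: FAIL plus "every window interrupted" still leaves the final segment after $i_R$ with no room, or more simply, re-index so that the relevant pigeonhole compares $R$ advances against $\le R-1$ interruptions in the open windows, using that the interruption triggering the last relevant suffix-change cannot lie in the post-$i_R$ region. I would double-check this bookkeeping against the precise game definition before committing to the final wording; modulo that, the probability bound $2R\beta$ follows immediately from the union bound over Lemma~\ref{lem:check}.
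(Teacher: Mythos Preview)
Your approach differs from the paper's, and the gap you flag is real and is not repaired by the bookkeeping you sketch.

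You try a charging/pigeonhole argument: if FAIL occurs there are $R$ counter-advances, and you want an advance whose window $(i_{r-1},i_r]$ contains no interruption, so that the stream seen by copies $r$ is oblivious and they must agree. But with $R$ advances and up to $R$ interruptions it is perfectly consistent for each of the $R$ windows $(i_0,i_1],\dots,(i_{R-1},i_R]$ to contain exactly one interruption; no pigeonhole forces an interruption-free window. Your patches do not help: FAIL is declared at time $i_R$ itself, so there is no extra post-$i_R$ segment to exploit, and nothing in the game prevents the $R$-th interruption from landing in $(i_{R-1},i_R]$. The ``re-index so that the relevant pigeonhole compares $R$ advances against $\leq R-1$ interruptions'' idea has no justification in the model.

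The paper avoids pigeonhole entirely. It lets $j_r$ be the time of the $r$-th interruption and $p_r$ the time of the $r$-th counter-advance, and proves by forward induction on $r$ that $j_r \leq p_r$ with probability at least $1-2r\beta$: the base case uses that the initial stream is independent of copies indexed $1$; the inductive step notes that the stream in effect at time $p_r$ was specified by the adversary using only outputs through time $p_r$, which depend only on copies $1,\dots,r$, so copies $r{+}1$ are independent of that stream and with probability $\geq 1-2\beta$ stay accurate on it---hence the counter does not leave $r{+}1$ before the next interruption, giving $j_{r+1}\leq p_{r+1}$. This is the same independence observation you rely on, but packaged as a forward coupling of the $r$-th interruption to the $r$-th advance rather than a backward search for an ``uncaused'' advance. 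One further looseness in your write-up: invoking Lemma~\ref{lem:check} plus an ``analogous statement for $\AAA^{\rm answer}_{r^*}$'' is not quite right, since the proof of Lemma~\ref{lem:check} is specific to the check copy (it builds an alternative execution that never consults $\AAA^{\rm check}_{r^*}$); the paper's proof of Lemma~\ref{lem:fail} argues the independence of both copies $r{+}1$ from the relevant stream directly, without routing through Lemma~\ref{lem:check}.
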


\begin{proof}
Let $j_1,j_2,\dots,j_R$ denote the time steps in which the adversary conducts interruptions. That is, $j_1$ is the first time in which the adversary switches the suffix of the stream, $j_2$ is the second time this happens, and so on. Also let $p_1,p_2,\dots,p_R$ denote the time steps in which the parameter $r$ increases during the execution of algorithm \texttt{RobustInterruptions}. Specifically, $p_{\ell}$ is the time $i$ in which $r$ becomes equal to $\ell+1$. We show that for every $r\in[R]$, with probability at least $1-2r\beta$ it holds that $j_r\leq p_r$. (That is, interruptions happen ``faster'' then $r$ increases.)

The proof is by induction on $r$. For the base case, $r=1$, let $\vec{x}_1$ denote the first stream chosen by the adversary. By the utility guarantees of $\AAA$, with probability at least $1-2\beta$ we have that both $\AAA^{\rm answer}_1$ and $\AAA^{\rm check}_1$ are $\alpha$-accurate w.r.t.\ this stream, in which case $r$ does not increase. Thus, with probability at least $1-2\beta$ we have $j_1\leq p_1$.

The inductive step is similar: Fix $r\in[R]$, and suppose that $j_r\leq p_r$, which happens with probability at least $(1-2r\beta)$ by the inductive assumption.  Let $\vec{x}_r$ denote the last stream specified by the adversary before time $p_r$.
Note that the internal coins of $\AAA^{\rm answer}_{r+1}$ and $\AAA^{\rm check}_{r+1}$ are independent with this stream. Hence,
by the utility guarantees of $\AAA$, with probability at least $1-2\beta$ we have that both $\AAA^{\rm answer}_{r+1}$ and $\AAA^{\rm check}_{r+1}$ are $\alpha$-accurate w.r.t.\ this stream, in which case $r$ does not increase. Overall, with probability at least $1-2(r+1)\beta$ we have $j_{r+1}\leq p_{r+1}$.

The lemma now follows by recalling that there are at most $R$ interruptions throughout the execution. Hence, with probability at least $1-2 R\beta$ it holds that $r$ never increases beyond $R$, and the algorithm does not fail.
\end{proof}

Theorem~\ref{thm:interruptions} now follows by combining Lemmas~\ref{lem:check},~\ref{lem:answer},~\ref{lem:fail}.

\subsection{A Negative Result for the ASBI Model}

\begin{theorem}\label{thm:BInegative}
For every $R$, there exists a streaming problem over domain of size $\poly(R)$ and
stream length $\poly(R)$ that requires at least $\Omega(R)$ space to be solved in the ASBI model with $R$ interruptions to within
constant accuracy (small enough), but can be solved in the oblivious setting using space $\polylog(R)$.
\end{theorem}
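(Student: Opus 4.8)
The plan is to exhibit a concrete streaming problem that is trivial in the oblivious setting but forces $\Omega(R)$ space under $R$ interruptions, and then to leverage the negative result for the $J_0$ problem (Theorem~\ref{thm:streamingNegative}) — or more precisely, a variant of its encoding argument — to prove the lower bound. The natural candidate is essentially the same ``remember a set'' problem used there: over a domain $X$ with $|X|=\poly(R)$, the stream first inserts an unknown set $Y\subseteq X$ with $|Y|=R$ (as turnstile updates $(x,+1)$), and the algorithm is asked to output, upon query, \emph{some} currently-supported element. The key observation is that the interruption model lets the adversary play exactly the role that the ``advice bits'' played in Theorem~\ref{thm:streamingNegative}: each time the algorithm names an element $x$, the adversary interrupts and appends $(x,-1)$ to the stream suffix (and otherwise replays whatever it had planned). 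Thus $R$ interruptions suffice to peel off all $R$ elements of $Y$ one at a time, and the sequence of answers recovers $Y$ in full.

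First I would make the oblivious upper bound precise: in the oblivious setting the support-peeling attack cannot occur (the stream is fixed in advance), so a standard $\ell_0$-sampler (e.g.\ the construction of Jowhari et al.\ cited earlier) solves the problem with $\polylog(R)$ space and constant error; if one prefers an even simpler target, one can define the problem so that the oblivious instance is literally ``output any element of a fixed nonempty set,'' solvable with $O(\log|X|)=O(\log R)$ bits by sampling-and-remembering. Second, I would set up the adversary formally in the ASBI game: it commits to an arbitrary stream that inserts $Y$ and then issues $R$ queries (recall $T=m$ in the ASBI model, so every round is a query); after receiving the $r$-th answer $x_r$, it uses its $r$-th interruption to overwrite the remaining suffix with the single deletion $(x_r,-1)$ followed by arbitrary padding. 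Third, I would run the counting/encoding argument exactly as in the proof of Theorem~\ref{thm:streamingNegative}: conditioning on a good fixture of the algorithm's coins, the internal state of the algorithm right after the insertion phase \emph{determines} the entire transcript of answers and hence determines $Y$; since there are $\binom{|X|}{R}$ choices of $Y$ and a constant fraction of them are handled correctly, the state must take at least $\Omega\!\big(\log\binom{|X|}{R}\big)=\Omega(R)$ values (using $|X|=\Theta(R)$ large enough), giving $\Omega(R)$ bits of space. The stream length is $R$ (insertions) plus $O(R)$ (the peeling queries and deletions) plus padding, i.e.\ $\poly(R)$, and the number of interruptions used is exactly $R$.

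The main obstacle — and the only place the argument differs from Theorem~\ref{thm:streamingNegative} — is handling the randomness correctly, since unlike the advice model there is no external ``advice string'' to fix; here the adversary is deterministic-given-the-transcript, so I would fix only the algorithm's internal coins (by averaging: the failure probability is at most $3/4$ over the coins, so some fixing of the coins succeeds on a $\ge 1/4$ fraction of the $Y$'s), and then the post-insertion state is a deterministic function of the coins and of $Y$, which is all the counting argument needs. A secondary point to check is that the adversary's interruptions are legitimately ``suffix overwrites'': each deletion $(x_r,-1)$ is appended in place of the not-yet-processed suffix, and this is consistent with the ASBI rules because the algorithm has already processed all insertions of $Y$ before the first query. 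Finally, to match the theorem statement's phrasing one should note that a constant accuracy / constant failure probability bound (rather than exact correctness) is enough: with error $3/4$ the fraction of correctly-handled $Y$'s is still a positive constant, which only changes the constant in the $\Omega(R)$ bound.
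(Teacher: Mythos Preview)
Your argument is correct as a proof that $J_0$ requires $\Omega(R)$ space under $R$ interruptions while being solvable obliviously in $\polylog(R)$ space; the counting argument goes through exactly as you describe, since after fixing the coins the post-insertion state deterministically generates the peeling transcript. However, the paper proves Theorem~\ref{thm:BInegative} by a genuinely different route, and the difference matters for what the theorem is meant to convey.

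The paper does \emph{not} use $J_0$. It instead invokes the SADA problem of Kaplan et al.~\cite{KaplanMNS21}, a real-valued estimation task built on top of the Adaptive Data Analysis (ADA) hardness of \cite{HardtU14,DBLP:conf/ita/SteinkeU16}. The two observations the paper needs are: (i) in the IFPC-based attack of \cite{DBLP:conf/ita/SteinkeU16}, although $\Theta(n^2)$ queries are issued, the queries come in only $O(n)$ adaptive ``bulks'' (the codewords are drawn in advance; adaptivity enters only when the accused set $S^j$ shrinks, which happens at most $n$ times); and (ii) the reduction of \cite{KaplanMNS21} from ADA to SADA preserves this bulk structure, so an ADA adversary with $O(n)$ adaptive rounds becomes an ASBI adversary with $O(n)$ interruptions. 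Setting $n=\Theta(R)$ yields the bound.

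What each approach buys: your argument is far more elementary and entirely self-contained (it recycles the encoding idea from Theorem~\ref{thm:streamingNegative} with no external machinery). The paper's argument, though heavier, yields a \emph{real-valued} estimation problem with a genuine accuracy parameter, which is what the phrase ``to within constant accuracy (small enough)'' in the theorem statement refers to. More substantively, SADA is a problem to which the $\sqrt{T}$ upper bound of Theorem~\ref{thm:HKMMS} applies in the plain adversarial model; thus the $\Omega(R)$ ASBI lower bound for SADA shows that one cannot hope for $\sqrt{R}$ dependence even for problems where the generic $\sqrt{T}$ transformation works. Your $J_0$ problem already requires $\Omega(T)$ space with $T$ adaptive queries in the plain model (indeed, that is essentially Theorem~\ref{thm:streamingNegative} without the advice), so the ASBI lower bound for $J_0$, while correct, does not separate $R$ from $\sqrt{R}$ in the same sense.
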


This theorem follows by revisiting the negative result of Kaplan et al.~\cite{KaplanMNS21} for the (plain) adversarial model. They presented a streaming problem, called the SADA problem, that is easy to solve in the oblivious setting but requires large space to be solved in the adversarial setting. To obtain their hardness results, \cite{KaplanMNS21} showed a reduction from a hard problem in learning theory (called the {\em adaptive data analysis (ADA) problem}) to the task of solving the SADA problem in the adversarial setting with small space. 

In the ADA problem, the goal is to design a mechanism $\AAA$ that initially obtains a dataset $D$ containing $n$ i.i.d.\ samples from some unknown distribution $\PPP$, and then answers $k$ {\em adaptively chosen queries} w.r.t.\ $\PPP$. Importantly, $\AAA$'s answers must be accurate w.r.t.\ the underlying distribution $\PPP$, and not just w.r.t.\ the empirical dataset $D$.
Hardt, Ullman, and Steinke~\cite{HardtU14,DBLP:conf/ita/SteinkeU16} showed that the ADA problem requires a large sample complexity. Specifically, they showed that every efficient\footnote{The results of \cite{HardtU14,DBLP:conf/ita/SteinkeU16} hold for all computationally efficient mechanisms, or alternatively, for a class of unbounded mechanisms which they call {\em natural} mechanisms.} mechanism for this problem must have sample complexity $n\geq\Omega(\sqrt{k})$.

Theorem~\ref{thm:BInegative} follows by the following two observations regarding the negative result of \cite{KaplanMNS21} for the SADA problem, and regarding the underlying hardness result of \cite{HardtU14,DBLP:conf/ita/SteinkeU16} for the ADA problem:
\begin{enumerate}
    \item In the hardness results of \cite{HardtU14,DBLP:conf/ita/SteinkeU16} for the ADA problem, the adversary generates the queries using $O(n)$ rounds of adaptivity, where $n$ is the sample size. In more detail, even though the adversary poses $\poly(n)$ queries throughout the interaction\footnote{The adversary poses $O(n^3)$ in \cite{HardtU14} and $O(n^2)$ in \cite{DBLP:conf/ita/SteinkeU16}.}, these queries are generated in $O(n)$ bulks where queries in the $j$th bulk depend only on answers given to queries of {\em previous} bulks.
    \item The reduction of Kaplan et al.~\cite{KaplanMNS21} from the ADA problem to the SADA problem maintains the number of adaptivity rounds. That is, the reduction of Kaplan et al.~\cite{KaplanMNS21} transforms an adversary for the ADA problem that generates the queries in $\ell$ bulks into an adversary for the SADA problem that uses $\ell$ interruptions.
\end{enumerate}

We remark that Theorem \ref{thm:BInegative} holds even for a model in which the streaming algorithm is strengthen and gets an indication during each interruption round.
That is true since by the technique of \cite{KaplanMNS21}, the streaming algorithm can identify the exact round of a new bulk and  such round corresponds to an interruption round.

In Appendix~\ref{sec:surveyADASADA} we survey the necessary details from \cite{KaplanMNS21,HardtU14,DBLP:conf/ita/SteinkeU16}, and provide a more detailed account of the modifications required in order to obtain Theorem~\ref{thm:BInegative}.

\bibliographystyle{abbrv}
\bibliography{bib}

\appendix
\section{Details for Theorem~\ref{thm:BInegative} }\label{sec:surveyADASADA}
In this section we elaborate on the components from which Theorem~\ref{thm:BInegative} follows:
\begin{enumerate}
    \item The hardness results of \cite{HardtU14,DBLP:conf/ita/SteinkeU16} for the ADA problem.
    \item The reduction of Kaplan et al.~\cite{KaplanMNS21} from the ADA problem to the SADA problem.
\end{enumerate}

The purpose is to show that the number of {\em adaptive} rounds of the reduction is $O(R)$ where $R$ is the bound on the interruptions, in spite of the fact that the number of queries throughout the attack of \cite{DBLP:conf/ita/SteinkeU16} is $O(R^2)$ and despite the fact that, in the reduction of \cite{KaplanMNS21}, these $O(R^2)$ queries are encoded and delivered to the streaming algorithm using a somewhat long stream of length $\poly(R)$. 
In other words, we show that the negative result presented by \cite{KaplanMNS21} for algorithms that solve the SADA problem is in fact stronger in the sense that it rules out algorithms in the ASBI model, and not only algorithms in the plain adversarial model.

\subsection{IFPC adaptivity level} We now elaborate on the number of { \em adaptive} rounds in the hardness results \cite{HardtU14,DBLP:conf/ita/SteinkeU16} (as apposed to the total number of rounds). Specifically, \cite{DBLP:conf/ita/SteinkeU16} presents a two player game protocol (see definition below) between the players {\em adversary} $\mathcal{P}$ and the finger printing code $\mathcal{F}$. In their paper, \cite{DBLP:conf/ita/SteinkeU16} use this definition of a game along with a code, namely {\em Interactive Finger Printing Code} (denote IFPA) to prove an upper bound on the number of accurate queries that can be guaranteed against an adaptive analyst. The role of the analyst is played by $\mathcal{F}$. \cite{DBLP:conf/ita/SteinkeU16} present an algorithm for $\mathcal{F}$ that assures that $\mathcal{P}$ looses after $O(n^2)$ adaptive rounds (i.e. return an inaccurate answer to a query of the analyst), where $n$ is the size of the database in the game. The game is defined as follows:\\

\noindent\fbox{%
    \parbox{\linewidth}{%
    {\bf Game protocol: adversary $\PPP$ vs IFPC $\FFF$}
        \begin{enumerate}
            \item $\PPP$ selects a subset $S^1 \subseteq [N]$, unknown to $\FFF$. 
            \item For $j = 1,\dots, \ell$: 
            \begin{enumerate}
                \item $\FFF$ outputs a column vector $c^j\in \{\pm1\}^N$
                \item Let $c^j_{S^j}\in \{\pm1\}^{|S^j|}$ be a restriction of $C^j$ to coordinates $S^j$, which is given to $\PPP$.\label{gamestep:restrictCodeWord}
                \item $\PPP$ outputs $a^j\in \{\pm1\}$, which is given to $\FFF$.
                \item $\FFF$ accuses (possibly empty) set of users $I^j\subseteq [N]$. Let $S^{j+1}=S^{j}\setminus I^{j}$.
            \end{enumerate}
        \end{enumerate}
    }%
}
\;

In above game, $\PPP$ is defined as a coalition of the users $S^1$, that receives in each round only a partial code word $C^j$. The goal of $\PPP$ is to remain {\em consistent}. That means that whenever the query $c_j$ is all $+1$ or all $-1$, then $\PPP$ must answer $+1$ or $-1$ correspondingly. On each round, $\FFF$ chooses some subset $I^j\subseteq [N]$ to accuse. That means that in the next round these entries will be also restricted from $\PPP$. The goal of $\FFF$ is to make $\PPP$ be inconsistent while it cannot accuse "too many" users that are not from $\PPP$ (i.e.~users $S^1$). 

In their paper \cite{DBLP:conf/ita/SteinkeU16} show an algorithm for $\FFF$ that assures inconsistency of any $\PPP$ after $O(n^2)$ number of rounds (see algorithm \ref{alg:IFPC}).
\begin{algorithm}[ht]
\caption{\bf \texttt{IFPC}($n, N, \delta, \beta$) \cite{DBLP:conf/ita/SteinkeU16}} \label{alg:IFPC}
\begin{flushleft}

{\bf Input:} Parameters: $N$ is the number of users, $1\leq n \leq N$ is the size of the coalition, $\delta \in (0,1]$ is the failure probability, $\beta < 1/2$ is the fraction of allowed inconsistent rounds.

\end{flushleft}

\begin{enumerate}[leftmargin=15pt,rightmargin=10pt,itemsep=1pt,topsep=3pt]

\item Set parameters $\alpha=(1/2-\beta)/4n=\Omega(1/n)$, $\zeta=3/8-\beta/4 = 1/2-1/4(1/2-\beta)$, $\sigma = O((n/(1/2-\beta)^2)/\log(\delta^{-1}))$, $\ell = O((n^2/(1/2-\beta)^4)\log(1/\delta))$

\item Let $s_i^0 = 0$ for every $i\in [N]$

\item for $j = 1,\dots , \ell$:
    \begin{enumerate}
    
        \item Draw $p_j\sim \overline{D_{\alpha,\zeta}}$ and $c^j_{1\dots N}\sim p^j$.
        
        \item Issue $c^j\in \{\pm 1\}^N$ as a challenge and receive $a^j\in \{\pm 1\}$ as a response.
        
        \item For $i\in [N]$, let $s_i^j = s_i^{j-1} + a_j\cdot \phi^{p^j}(c_i^j)$.
        
        \item Accuse $I^j = \left\{ i\in [N] | s_i^j >\sigma \right\}$
    \end{enumerate}

\end{enumerate}
\end{algorithm}
In that algorithm (\ref{alg:IFPC}), $\overline{D_{\alpha,\zeta}}$ is a distribution over $[0,1]$ from which a Bernoulli parameter $p^j$ is drawn and used to generate the $j$th code word $c_{1\dots N}^j\in p^j$ and $\phi^{p^j}:\{0,1\}\rightarrow \R$ is a function that measures a correlation quantity between user $i$'s input $c^j_i$ and the output $a^j$. The main idea is to accumulate for each user this correlation quantity over the iterations, and once this quantity crossing some threshold for a user, then the algorithm decides that the user is a part of the coalition and is marked as such (accused).

Importantly to our use case, note that in algorithm \ref{alg:IFPC} all of the code words $c^j$ can be drawn in advance. That is, they are independent from the answers $a^j$. The code words are not presented to $\PPP$ all at once but one by one, and so the algorithm \ref{alg:IFPC} is {\em interactive}, while it is {\em not adaptive}.

Yet the game protocol itself, {\em is} adaptive. In the protocol step~\ref{gamestep:restrictCodeWord}, the part of the code word that is sent to $\PPP$ is restricted only to the coalition users $S^j$. That is, the users from the initial coalition $S^1$ that have not yet been accused, which is a function of all previous answers $a^j$.
And so, the adaptivity is reflected by the times that code words that are given to $\PPP$ are determined. 

We now conclude that the number of such determining times is only $O(n)$: The list of coalition users that are not accused $S^j$ is  monotonic decreasing, thus can be updated at most $n$ times. 
Now, recall that the algorithm has length of $O(n^2)$ iterations. 
In addition the set of code words that are given to $\PPP$ between two consecutive modifications of $S^j$ is fixed.
And so, denote by $j_1,\dots,j_k$ as the times that the list $S^j$ is modified (for some $k\leq n$), then for $i\in [k]$ during the iterations $[j_i,\dots,j_{i+1})$ the game protocol is {\em not} adaptive. 

\subsection{ADA to SADA reduction maintains adaptivity level} Now we look on the hardness result of Kaplan et al.~\cite{KaplanMNS21}. In their paper, Kaplan et al.~ show a reduction from the ADA problem, that is shown to be hard (a bound of $O(n^2)$ query rounds) in \cite{DBLP:conf/ita/SteinkeU16}, to SADA problem (Streaming Adaptive Data Analysis). In high level, the idea has the following components: \begin{enumerate}
    \item {\bf Stream generation and a streaming algorithm:} The stream is determined w.r.t.~the game protocol of IFPC where $\PPP$ is a streaming algorithm that answers queries (encoded in the input stream).
    \item {\bf Compression:} Algorithms with small space (significantly smaller than the size of their input dataset) are known to have strong generalization properties. Hence, if a small space algorithm is solving the SADA problem (for number of queries $\gg O(n^2)$ and with space $\ll n$) then it must, in fact, solve the underlying statistical ADA problem for the same number of queries, contradicting \cite{DBLP:conf/ita/SteinkeU16}.
\end{enumerate}

These two components show that any algorithm for the SADA problem must have space $\Omega(n)$. 
Since the number of {\em bounded interruptions} in our lower bound paradigm is the number of {\em adaptive}  rounds that is determined in the first component, we elaborate on that component only.

The stream is defined w.r.t.~the IFPC game protocol in two stages. 
First stage is setting the set $S^1$ via $n$ updates in the stream. Each stream update encode a single user in $S^1$. 
Then on the second stage, for $O(n^2)$ game protocol rounds, the restricted query $c^j_{S^j}$ is sent for the streaming algorithm for an answer. A small issue is that each of these queries is of encoded size of $\poly(n)$, and thus it takes $\poly(n)$ stream updates for the streaming algorithm to receive it (and so, the stream length of the attack is of $\poly(n)$). 
After each such (encoded) query, the streaming algorithm must answer (by \cite{DBLP:conf/ita/SteinkeU16}, any such algorithm must fail after $O(n^2)$ queries).

Yet, the observation that we have only $O(n)$ {\em adaptive} rounds remains, since the reduction uses the game protocol of \cite{DBLP:conf/ita/SteinkeU16}.
And so, a similar reduction holds from the ADA problem to the SADA problem in the ASBI model with $O(n)$ rounds of interruptions. This implies a lower bound on the space of this model.
\end{document}